\def\id{{\rm 1\kern-.22em l}}
\newtheorem{lem}{Lemma}
\newtheorem{deff}{Definition}
\newtheorem{teor}{Theorem}
\begin{document}

\title{Duality of averaging of  quantum states over arbitrary symmetry groups revealing Schur-Weyl duality}

\author{Marcin Markiewicz}
\affiliation{International Centre for Theory of Quantum Technologies (ICTQT),
University of Gdansk, 80-308 Gdansk, Poland}
\author{Janusz Przewocki}
\affiliation{Institute of Mathematics, University of Gdansk, Wita Stwosza 57, 80-308 Gdansk, Poland}
\email{marcinm495@gmail.com, jprzew@mat.ug.edu.pl}

\begin{abstract}
It is a well-established fact in quantum information theory, that uniform averaging over the collective action of a unitary group on a multipartite quantum state projects the state to a form equivalent to a permutation operator of the subsystems. Hence states equivalent to permutation operators are untouched by collective unitary noise. A trivial observation shows that uniform averaging over permutation operators projects the state into a form with block-diagonal structure equivalent to the one of the collective action of the unitary group. We introduce a name for this property: \textit{duality of averaging}. The mathematical reason behind this duality is the fact that the collective action of the unitary group on the tensor product state space of a multipartite quantum system and the action of the permutation operations are mutual commutants when treated as matrix algebras. Such pairs of matrix algebras are known as \textit{dual reductive pairs}. In this work we show, that in the case of finite dimensional quantum systems such duality of averaging holds for \textit{any} pairs of symmetry groups being dual reductive pairs, regardless of whether they are compact or not, as long as the averaging operation is defined via iterated integral over the Cartan decomposition of the group action. Although our result is very general, we focus much attention on the concrete  example of a dual reductive pair consisting of collective action of special linear matrices and permutation operations, which physically corresponds to averaging multipartite quantum states over non-unitary SLOCC-type (Stochastic Local Operations and Classical Communication) operations. In this context we show, that \textit{noiseless subsystems} known from collective unitary averaging persist in the case of SLOCC averaging in a conditional way: upon postselection to specific invariant subspaces.
\end{abstract}

\maketitle

\section{Introduction}

\subsection{Schur-Weyl duality}
Schur-Weyl duality \cite{Goodman09, Brundan08, Dipper08, Doty09, Marvian14, Zhang15, Gross21} and the concept of \textit{noiseless subsystems} \cite{Zanardi97, Knill00, Kempe01, ZanardiVirt01, Bartlett03a, Bartlett07, Migdal11} are two faces of the same coin. Schur-Weyl duality  introduces pairs of matrix groups (called \textit{dual reductive pairs)}, which, treated as matrix algebras, maximally mutually commute, therefore quantum states spanned by operators from one of the elements of the dual pair are untouched by the quantum evolution encoded by operators from the second element of the pair \cite{Marvian14}. There is however one important assumption in the above story: both groups have to  possess finite-dimensional unitary representations. Indeed, if $\mathbb U$ is a subgroup of the unitary group $\operatorname{U}(d)$, treated here as a matrix algebra, and $\mathbb S$ is a commutant of $\mathbb U$, then any quantum state $\rho$, which is in the span of $\mathbb S$, by definition (and linearity) commutes with arbitrary element $V\in \mathbb U$, and therefore is untouched by the action of $V$:
\begin{equation}
    \label{noiselessUnit}
    V\rho V^{\dagger}=\rho VV^{\dagger}=\rho.
\end{equation}
The canonical example of a dual reductive pair contains the following two matrix algebras:
\begin{itemize}
    \item algebra spanned by the collective action $U^{\otimes t}$ of a unitary group $\textrm U(d)$ on a complex tensor space $\mathbb (\mathbb C^d)^{\otimes t}$,
    \item algebra generated by unitary representation of permutation operators on the tensor space $\mathbb (\mathbb C^d)^{\otimes t}$.
\end{itemize}
In standard quantum information processing terms,  first algebra represents \textit{collective unitary noise} on a system of $t$ $d$-level quantum systems (qudits), whereas the second represents swapping of subsystems. The moral of the story is that quantum states which are equivalent to permutation operators are untouched by the collective unitary noise. 

In the literature one can find other examples of dual reductive pairs, in which one of the terms is represented by collective action of some subgroup of the unitary group, whereas the second is some discrete or continuous group containing the symmetric group as a subgroup. The examples are:
\begin{itemize}
    \item Schur-Weyl duality for collective action of the Clifford unitary operators \cite{Gross21},
    \item Schur-Weyl duality for \textit{gauge groups} \cite{Marvian14}, which are defined as centralisers of some subgroup of a unitary group with respect to the unitary group.
\end{itemize}
All of the above examples share the same interpretation in the context of noiseless subsystems.

\subsection{Duality of averaging}
The idea of \textit{averaging} quantum states over symmetry operations appears in numerous applications within the entire field of quantum information science. It is a crucial
tool in analysing noisy quantum communication, both in the non-relativistic \cite{Bartlett03a,  Bartlett07, Migdal11} as well as relativistic context  \cite{Bartlett05,  Dragan15, Piani16}. Averaging over symmetry operations plays also an important role in developing randomised quantum information processing protocols \cite{Dankert05, Hunter19} and quantum algorithms \cite{Brandao13}. 
Schur-Weyl duality for unitary groups and their subgroups has a crucial impact on the idea of collective uniform averaging  of quantum states with respect to these groups (called \textit{twirling} in the quantum information context \cite{Bartlett03a, Bartlett07}). Namely such a collective averaging projects a quantum state into the subspace spanned by the commutant (second counterpart of the dual reductive pair) \cite{Bartlett03a, Bartlett07}.
In more details this aspect of Schur-Weyl duality is related with the group-representation-theoretic property of a dual reductive pair, namely that irreducible representations of each elements of a pair are in one-to-one correspondence, and that their joint action has a unique block-diagonal decomposition induced  by this correspondence.
 For example uniform collective unitary averaging of a multipartite finite-dimensional quantum state projects the state into the block-diagonal form  consisting of  projections of the state onto subspaces, which are irreducible with respect to the action of  the permutation group. An obvious, though not already found by the authors anywhere in the literature, consequence of the Schur-Weyl duality for uniform averaging, is that the same holds if one averages quantum state uniformly over the second factor in the dual reductive pair, namely the state is projected onto irreducible subspaces with respect to the first factor. We call this property \textit{duality of averaging with respect to dual reductive pair}. As a simple example, uniform averaging of a multipartite quantum state over symmetric group projects the state into a block-diagonal form consisting of projections of the state into subspaces irreducible with respect to the collective action of the unitary group.
 
 The open problem, which we state in this work, is the question, whether described \textit{duality of averaging} holds if at least one component of the reductive pair is a non-compact symmetry group. An emblematic example of such a pair is the collective action of a special linear group $\textrm{SL}(d,\mathbb C)$, which physically represents collective SLOCC-type operations (Stochastic Local Operations and Classical Communication) \cite{Dur00, Bennett00, Verstraete02, Donald02, Avron09, Migdal13, Sawicki14, Jarvis14, Zhang16}, and the symmetric group \cite{Goodman09, Etingof10}. On the one hand there seem to appear two crucial obstacles. The first one is that due to non-unitarity of the SLOCC operations, the quantum state, which commutes with the SLOCC operation still seems to be seriously affected by this transformation. Indeed, if $L\in \textrm{SL}(d,\mathbb C)$ and $[\rho, L]=0$ we have:
 \begin{equation}
    \label{noiselessSLOCC}
    L\rho L^{\dagger}=\rho LL^{\dagger}\neq \rho.
\end{equation}
 In general the operator $LL^{\dagger}$ is not proportional to identity, therefore the problem is more serious than just a normalisation issue. The second problem is that averaging over SLOCC operations, due to non-compact character of the transformations, cannot be performed in a uniform way, and a suppressing integration measure has to be introduced in order to assure convergent result. On the other hand, as shown in our previous work \cite{Markiewicz21}, an important aspect of unitary averaging, namely existence of \textit{finite averaging sets} (known in the quantum information community as unitary $t$-designs \cite{Dankert05, Gross07, Dankert09, Roy09}) persists in the process of averaging over non-compact $\textrm{SL}(d,\mathbb C)$ group. 
 
 \subsection{Main results}
 In this work we show, that the \textit{duality of averaging} indeed persists in the context of averaging over SLOCC operations with a slight modification, namely such a process of averaging projects a quantum state into block-diagonal form consisting of projections onto subspaces irreducible with respect to the symmetric group, but with additional non-trivial weights, which correspond to the process of averaging over the non-compact part of the SLOCC operation. These weights explicitly depend on the assumed measure of integration over the non-compact part. Therefore we can say, that globally there are no noiseless subsystems under the collective action of the SLOCC operations, however we can conditionally restore them by restricting to a single irreducible subspace and performing projection to this subspace, which succeeds with probability specified by the additional weight factors.
 
 Further on we prove a general statement that the duality of averaging in the defined sense persists for \textit{any} dual reductive pair regardless of the compactness of the symmetry group under consideration, as long as one defines the generalised twirling operation using the Cartan decomposition of the group operation.
 
 \subsection{Outline}
 The paper is organised as follows. In Section \ref{Sec:SWIntro} we introduce the concept of duality of averaging for the well-known unitary twirling (with all the technical details moved to the Appendices), in Section \ref{Sec:AVSLOCCDef} we show how one can define and understand averaging quantum states over SLOCC operations, in Section \ref{Sec:ExactSLOCC} we derive the closed analytical form of the SLOCC-twirling map. Finally we prove the general result on duality of averaging with respect to any dual reductive pair in Section \ref{Sec:Conj} and present final remarks in Section \ref{Sec:Conc}.

\section{Schur-Weyl duality and duality of averaging over collective  unitary operations}
\label{Sec:SWIntro}

Schur-Weyl duality for the unitary group $\textrm U(d)$ and the symmetric group $\textrm S_t$ over the set of $t$ elements describes an interplay between the collective action of the unitary group and the symmetric group on the tensor product space $(\mathbb C^d)^{\otimes t}$. The collective action of the unitary group is specified by operators  $U^{\otimes t}, U \in \textrm{U}(d)$, whereas the action of the symmetric group $\textrm{S}_t$ is defined as a permutation of the factors in $(\mathbb C^d)^{\otimes t}$. Let $$\ket{v}=\ket{v_{1}}\otimes\cdots\otimes\ket{v_{t}}$$ be arbitrary product vector in $(\mathbb C^d)^{\otimes t}$. Then any permutation $p\in \textrm{S}_t$ acts on $\ket{v}$ as follows:
\begin{equation}
    p\left(\ket{v}\right)=\ket{v_{p^{-1}(1)}}\otimes\cdots\otimes\ket{v_{p^{-1}(t)}}.
\end{equation}
 For further considerations it is necessary to describe the above action by a matrix transformation. Therefore we introduce the following matrix representation of the symmetric group on the complex vector space $(\mathbb C^d)^{\otimes t}$ known as \textit{tensor permutation operators} \cite{Christian06, Christian07}. To every permutation $p\in \textrm{S}_t$ we associate an orthogonal matrix $O_p\in \textrm{O}(d^t)$ as follows. Let $\tau(p)=\tau_1\ldots \tau_{n-1}=(i_1,i_2)(i_2,i_3)\ldots(i_{n-1}i_{n})$ be a (in general non-unique) decomposition of $p$ into transpositions. Then the matrix $O_p$ reads:
\begin{equation}
\label{ORep}
    O_p=O_{\tau_1}\ldots O_{\tau_{n-1}},
\end{equation}
in which the orthogonal transposition matrices are defined as \cite{Christian06, Christian07}:
\begin{equation}
    O_{(i_k, i_{k+1})}=\frac{1}{d}\id_{d^t}+\frac{1}{2}\sum_{j=1}^{d^2-1}\id\otimes\ldots\otimes\gamma_j\otimes\ldots\otimes\gamma_j\otimes\ldots\otimes\id,
\end{equation}
in which $\gamma_j$ are generalised Gell-Mann matrices (Hermitian generators of $\mathfrak u(d)$ algebra) placed at positions $i_k$ and $i_{k+1}$ in the tensor product. Since $\tau(p)\tau(q)=\tau(pq)$, we have due to \eqref{ORep} $O_pO_q=O_{pq}$, and therefore the mapping $p\mapsto O_p$ is indeed a group representation.

The actions on the space $(\mathbb C^d)^{\otimes t}$ of the permutation group and of the unitary group mutually commute in a \textit{maximal sense}, namely the matrix algebras generated by $\{U^{\otimes t}\}_{U\in \textrm{U}(d)}$ and $\{O_p\}_{p\in \textrm{S}_t}$ are mutual commutants in the algebra of endomorphisms of $(\mathbb C^d)^{\otimes t}$. This implies, that there exists a basis of $(\mathbb C^d)^{\otimes t}$, called \textit{Schur basis}, which via outer product of vectors generates two operator bases block-diagonalising both matrix algebras $\{U^{\otimes t}\}_{U\in \textrm{U}(d)}$ and $\{O_p\}_{p\in  \textrm{S}_t}$. In order to distinguish between the vector basis and the two operator bases, we will refer to the basis on $(\mathbb C^d)^{\otimes t}$ as \textit{Schur vector basis}, whereas the two operator bases for operators acting on this space will be referred to as \textit{Schur operator bases}.

Let us first introduce the structure of the Schur vector basis, the
elements of which will be enumerated by three indices: $\ket{i,m,\lambda}$, in which $i$ numbers irreducible representations of unitary and symmetric groups up to isomorphism, whereas the remaining indices $m$ and $\lambda$ describe further structure of each $i$-th subspace, which will be introduced later on.
From representation-theoretic point of view this means that the joint action of both groups is reducible on $(\mathbb C^d)^{\otimes t}$ and that moreover it decomposes into irreducible actions of both groups, which are in one-to-one correspondence. The detailed construction of the Schur vector basis is described in the Appendix \ref{App:SWDuality}, here we just define the crucial tools needed for further considerations.

The \textit{inequivalent} irreducible representations of both groups: the unitary group $\textrm U(d)$ and the symmetric group $\textrm S_t$ on the tensor space   $(\mathbb C^d)^{\otimes t}$ are labeled by all inequivalent \emph{Young diagrams} with $t$ boxes and at most $d$ rows, which correspond to all inequivalent partitions of $t$ into the sum of $d$ positive integers \cite{Tung}. Let us label these representations with the index $i$. As a side remark we point out that there exists no closed analytic formula for the range of index $i$, namely for the number of inequivalent partitions with fixed number of terms \cite{Oruc16}. Therefore we would not specify the range of the index $i$.
Each $i$-th  subspace of $(\mathbb C^d)^{\otimes t}$ irreducible with respect to the joint action of $U^{\otimes t}$ and $O_p$, contains in general multiple subspaces, which are irreducible with respect to the separate action of symmetric or collective unitary operators. These subspaces correspond to \textit{equivalent but distinct} irreducible representations of the corresponding groups, and therefore each $i$-th subspace can be further organised in the following way:
\begin{eqnarray}
    \label{SchurBasis0}
     L^i_1:\,\,&&\ket{i,1,1}\,\,\,\,\,\,\,\,\,\ket{i,2,1}\,\,\,\,\,\,\ldots\,\,\,\,\,\,\ket{i,D^i_L,1} \nonumber\\
     L^i_2:\,\,&&\ket{i,1,2}\,\,\,\,\,\,\,\,\,\ket{i,2,2}\,\,\,\,\,\,\ldots\,\,\,\,\,\,\ket{i,D^i_L,2} \nonumber\\
     && \ldots\ldots\ldots\ldots\ldots\ldots\ldots\ldots\ldots\ldots\ldots\ldots \nonumber\\
       L^i_{D^i_V}:\,\,&&\underbrace{\ket{i,1,D^i_V}}_{V^i_1}\,\,\underbrace{\ket{i,2,D^i_V}}_{V^i_2}\,\,\,\ldots\,\,\,\underbrace{\ket{i,D^i_L,D^i_V}}_{V^i_{D^i_L}}.\nonumber\\
    \end{eqnarray}
In the above table each vector of the form $\ket{i,m,\lambda}$ represents one element of the Schur basis. Subspaces $L^i_{\lambda}$, spanned by vectors  $\{\ket{i,m,\lambda}\}_{m=1}^{D^i_L}$ occurring in each row, are invariant and irreducible under the collective action of the unitary group $U^{\otimes t}$, and correspond to equivalent irreducible representations of the unitary group $\textrm U(d)$ of dimension $D^i_L$ (explicit formulae for the dimensions can be found in the Appendix).
Similarly the subspaces $V^i_m$ spanned by vectors $\{\ket{i,m,\lambda}\}_{\lambda=1}^{D^i_V}$ appearing in each column are invariant and irreducible under the action of the symmetric group $\textrm S_t$, and correspond to equivalent $D^i_V$ dimensional irreducible representations of the symmetric group (they are called \textit{Specht moduli} in the representation theory \cite{Etingof10}). It can be seen that the subspaces $L^i_{\lambda}$ and $V^i_m$ play the role of \textit{multiplicity spaces} for the corresponding representations: subspaces $V^i_m$ are multiplicity spaces for representations of the unitary group, whereas subspaces $L^i_{\lambda}$ are multiplicity spaces for representations of the symmetric group \cite{Bartlett07}. Direct sum:
\begin{equation}
    \label{pspace}
P_i \equiv \bigoplus_\lambda L^i_\lambda = \bigoplus_m V^i_m
\end{equation}
contains all subrepresentations of $U^{\otimes t}$ and $S_\textrm t$ up to the same isomorphism type.

Before going further let us illustrate the described Schur vector basis decomposition by an example of the tensor space $(\mathbb C^2)^{\otimes 3}$, which physically represents a state space of three qubits (two-level systems). Under the joint action of $\textrm U(2)^{\otimes 3}$ and $S_3$ the space $(\mathbb C^2)^{\otimes 3}$ decomposes into two irreducible subspaces corresponding to inequivalent irreducible representations of the unitary group $\textrm U(2)$ and $S_3$:
\begin{itemize}
    \item $i=1$ subspace, corresponding to a Young diagram with three boxes places in one row: $4$-dimensional subspace of symmetric tensors, spanned by Schur basis vectors:
   \begin{equation}
        L^1_{1}:\,\,\underbrace{\ket{1,1,1}}_{V^1_1}\,\,\underbrace{\ket{1,2,1}}_{V^1_2}\,\,\underbrace{\ket{1,3,1}}_{V^1_3}\,\,\underbrace{\ket{1,4,1}}_{V^1_{4}},
   \end{equation}
    on which there acts a one-dimensional trivial representation of the symmetric group $S_3$ (hence $D^1_V=1$), and $4$-dimensional representation of the $\textrm U(2)$ group, namely the spin-$\tfrac{3}{2}$ representation (hence $D^1_L=4$). Note that the entire $i=1$ subspace is irreducibly invariant under the action of $U^{\otimes 3}$, whereas each one-dimensional subspace corresponding to each element of Schur vector basis is invariant under symmetric group $S_3$ (since each element of Schur vector basis is here fully symmetric when expressed in computational basis).
    \item $i=2$ subspace, corresponding to a Young diagram with two boxes in first row and one box in the second: $4$-dimensional subspace of mixed-symmetry tensors, spanned by Schur basis vectors:
 \begin{eqnarray}
     L^2_1:\,\,&&\ket{2,1,1}\,\,\ket{2,2,1} \nonumber\\
       L^2_{2}:\,\,&&\underbrace{\ket{2,1,2}}_{V^2_1}\,\,\underbrace{\ket{2,2,2}}_{V^2_2},
    \end{eqnarray}
    on which there act two equivalent irreducible two-dimensional representations of the unitary group $\textrm U(2)$ (spin-$\tfrac{1}{2}$ representations) with the corresponding invariant subspaces denoted by $L^2_1$ and $L^2_2$, and two equivalent two-dimensional representations of the symmetric group $S_3$ with the corresponding invariant subspaces $V^2_1$ and $V^2_2$.
\end{itemize}

In most common approach to Schur-Weyl duality one treats the Schur basis vectors $\ket{i,m,\lambda}$ as \textit{virtual tensor products} $\ket{m}_i\otimes\ket{\lambda}_i$, which correspond to tensor products of vectors from \textit{virtual subspaces} $\{\ket{m}_i\}_{m=1}^{D^{i}_L}$ and $\{\ket{\lambda}_i\}_{\lambda=1}^{D^{i}_V}$ \cite{Zanardi97, ZanardiVirt01, Bartlett07}. Here we introduce other approach, based on operator bases build up from outer products of Schur vector basis elements. These operator bases are extremely useful when dealing with averaging maps. Let us start from the most general operator basis build up of Schur basis vectors:
\begin{equation}
    \label{FullPiBasis0}
    \hat\Pi_{ij}^{m_1\lambda_1 m_2\lambda_2}=\ket{i,m_1,\lambda_1}\bra{j,m_2,\lambda_2}.
\end{equation}
The above operators allow us to define two \textit{Schur operator bases}:
\begin{eqnarray}
\label{PiBasis0}
\hat\Pi^{\lambda_1\lambda_2}_i&=&\sum_{m=1}^{D^i_L}\hat\Pi_{ii}^{m\lambda_1 m\lambda_2},\nonumber\\
\hat\Pi^{m_1 m_2}_i&=&\sum_{\lambda=1}^{D^i_V}\hat\Pi_{ii}^{m_1\lambda m_2\lambda}.
\end{eqnarray}
The set $\{\hat\Pi^{\lambda_1\lambda_2}_i\}$ is a basis for operators, which leave invariant the subspaces $V^i_m$  irreducible under the action of the symmetric group, whereas the set $\{\hat\Pi^{m_1m_2}_i\}$ spans operators, which leave invariant the subspaces $L^i_{\lambda}$ irreducible under the collective action of the unitary group. Therefore the collective action of the unitary group $U^{\otimes t}$ as well as the orthogonal representation of the symmetric group $O_p$ have natural decompositions in terms of these operator bases:
\begin{equation}
    \label{Ut0}
    U^{\otimes t}=\sum_{i}\frac{1}{D^i_V}\sum_{m_1,m_2=1}^{D^i_L}\operatorname{Tr}\left(U^{\otimes t}\hat\Pi^{m_1 m_2\dagger}_i\right)\hat\Pi^{m_1 m_2}_i,
\end{equation}
and analogously:
\begin{equation}
    \label{Op0}
    O_{p}=\sum_{i}\frac{1}{D^i_L}\sum_{\lambda_1,\lambda_2=1}^{D^i_V}\operatorname{Tr}\left(O_p\hat\Pi^{\lambda_1 \lambda_2\dagger}_i\right)\hat\Pi^{\lambda_1 \lambda_2}_i.
\end{equation}
The fact that matrix algebras generated by two kinds of operators \eqref{Ut0} and \eqref{Op0} are mutual commutants is reflected by the commutativity of the above defined operator bases:
\begin{equation}
    \label{PiComm0}
    [\hat\Pi^{m_1 m_2}_i,\hat\Pi^{\lambda_1 \lambda_2}_j]=0.
\end{equation}
Let us now define two averaging operations with respect to both elements of the dual reductive pair:
\begin{equation}
\label{UnitTwirlDef0}
    \mathcal T_{\operatorname{U}}(\rho)=\int U^{\otimes t}\rho U^{\dagger \otimes t} dU, \,U \in \textrm U(d),
\end{equation}
\begin{equation}
    \label{SymTwirlDef0}
    \mathcal T_{\textrm{sym}}(\rho)=\frac{1}{t!}\sum_{p\in \operatorname{S}_t}O_{p}\rho O_p^T.
\end{equation}
The first operation, known as \textit{unitary twirling} \cite{Bartlett07}, represents averaging a quantum state $\rho$ of $t$ $d$-level subsystems over the collective action of the unitary group in a uniform way: $dU$ represents a normalised Haar measure on the unitary group. Such operation can have different physical interpretations, the two most common are the following:
\begin{itemize}
    \item  multipartite quantum system under consideration is subject to collective local unitary noise \cite{Marvian14, Markiewicz17},
    \item multipartite quantum system is sent between two observers that do not share a common reference frame \cite{Bartlett03a, Bartlett07}.
\end{itemize}
The second operation, which we call \textit{symmetric twirling} is non-local with respect to subsystems and has possible applications in the theory of randomised quantum protocols \cite{Holger17}. 

The fact that both twirling operations \eqref{UnitTwirlDef0} and \eqref{SymTwirlDef0} represent averaging with respect to symmetry groups that belong to a dual reductive pair implies the following duality of averaging:
\begin{equation}
    \label{fullUnitTwirlMap0}
    \mathcal T_{\operatorname{U}}(\rho)=\sum_{k}\frac{1}{D^k_L}\sum_{\lambda_1\lambda_2=1}^{D^k_V}\operatorname{Tr}\left(\rho\hat\Pi_{k}^{\lambda_1\lambda_2\dagger}\right)\hat\Pi_{k}^{\lambda_1\lambda_2}.
\end{equation}
\begin{equation}
    \label{fullSymTwirlMap0}
    \mathcal T_{\textrm{sym}}(\rho)=\sum_{k}\frac{1}{D^k_V}\sum_{m_1,m_2=1}^{D^k_L}\operatorname{Tr}\left(\rho\hat\Pi^{m_1 m_2\dagger}_k\right)\hat\Pi^{m_1 m_2}_k.
\end{equation}
We can see that closed analytical forms of both maps are in the same relation as operators \eqref{Op0} and \eqref{Ut0}:
\begin{itemize}
    \item unitary twirling $\mathcal T_{\operatorname{U}}(\rho)$ projects a quantum state into block-diagonal form (with respect to block index $k$) consisting of subspaces  which are invariant and irreducible under the action of the symmetric group;
    \item symmetric twirling $\mathcal T_{\textrm{sym}}(\rho)$ projects a quantum state into block-diagonal form consisting of subspaces which are invariant and irreducible under the collective action of the unitary group.
\end{itemize}
Both analytical formulas \eqref{fullUnitTwirlMap0} and \eqref{fullSymTwirlMap0} are simple consequences of Schur's Lemma (see Appendix \ref{App:SLemmas}). Their explicit proofs are however a bit technical (though standard),  we present them for completeness in Appendices \ref{App:UnitTwirl} and \ref{App:SymTwirl}. Both twirling operations \eqref{fullUnitTwirlMap0} and \eqref{fullSymTwirlMap0} are idempotent: $\mathcal T_{\operatorname{U}}(\mathcal{T}_{\operatorname{U}}(\rho))=\mathcal T_{\operatorname{U}}(\rho)$,  $\mathcal T_{\textrm{sym}}(\mathcal{T_{\textrm{sym}}}(\rho))=\mathcal T_{\textrm{sym}}(\rho)$ (see Appendices \ref{App:UnitTwirl} and \ref{App:SymTwirl}), which implies that states of the form:
\begin{eqnarray}
    \label{dfsstate}
    \rho_{\textrm{U}}&=&\sum_{k}\frac{1}{D^k_L}\sum_{\lambda_1\lambda_2=1}^{D^k_V}\rho^{k}_{\lambda_1\lambda_2}\hat\Pi_{k}^{\lambda_1\lambda_2},\nonumber\\
    \rho_\textrm{S}&=&\sum_{k}\frac{1}{D^k_V}\sum_{m_1,m_2=1}^{D^k_L}\rho_{m_1 m_2}^k\hat\Pi^{m_1 m_2}_k,
\end{eqnarray}
span \textit{noiseless subsystems} with respect to the twirling operations:
\begin{eqnarray}
\label{unitdfs}
&&\mathcal T_{\operatorname{U}}( \rho_{\textrm{U}})= \rho_{\textrm{U}},\nonumber\\
&&\mathcal T_{\textrm{sym}}( \rho_{\textrm{S}})= \rho_{\textrm{S}}.
\end{eqnarray}

\section{Averaging multipartite quantum states over SLOCC operations}
\label{Sec:AVSLOCCDef}

In this section we introduce the concept of 
averaging multipartite quantum states of a finite dimension $d$ over the set of the most general quantum operations, namely Stochastic Local Operations and Classical Communication (SLOCC) \cite{Dur00, Bennett00, Verstraete02, Donald02, Avron09, Migdal13, Sawicki14, Jarvis14, Zhang16}. Mathematically, these operations are described by the following map \cite{Avron09}:
\begin{equation}
\label{SLOCCMap}
    \mathcal S(\rho) = \bigotimes_{i=1}^t L_i\rho\bigotimes_{i=1}^t L_i^{\dagger},
\end{equation}
where the matrices $L_i$ are normalized special linear $\textrm{SL}(d,\mathbb C)$ matrices, namely:
\begin{equation}
    L=\frac{M}{\|M\|}, \, M \in \textrm{SL}(d,\mathbb C).
\end{equation}

\begin{lem}
\label{lem:trace}
The map $\mathcal S$ preserves the positivity of the matrix $\rho$, and is trace non-increasing.
\end{lem}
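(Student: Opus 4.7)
The plan is to unpack $\mathcal S$ as conjugation by the single operator $L := \bigotimes_{i=1}^t L_i$, so that $\mathcal S(\rho) = L\rho L^{\dagger}$, and then check the two claims separately.

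\textbf{Positivity.} This is immediate from the conjugation form: for any $\ket{v}\in (\mathbb C^d)^{\otimes t}$,
\begin{equation*}
\bra{v}\mathcal S(\rho)\ket{v}=\bra{v}L\rho L^{\dagger}\ket{v}=\bra{w}\rho\ket{w}\ge 0,\qquad \ket{w}:=L^{\dagger}\ket{v},
\end{equation*}
because $\rho$ is positive semidefinite. Hence $\mathcal S(\rho)\ge 0$ whenever $\rho\ge 0$.

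\textbf{Trace non-increasing.} By cyclicity, $\operatorname{Tr}\mathcal S(\rho)=\operatorname{Tr}(L^{\dagger}L\,\rho)$, so the claim reduces to the operator inequality $L^{\dagger}L\le \id$, equivalently $\|L\|_{\mathrm{op}}\le 1$. The operator norm is multiplicative under tensor products, so $\|L\|_{\mathrm{op}}=\prod_{i=1}^{t}\|L_i\|_{\mathrm{op}}$. Each factor satisfies $\|L_i\|_{\mathrm{op}}=\|M_i\|/\|M_i\|=1$, using that the normalization in the definition of $L_i$ is taken with respect to the same (operator) norm. Therefore $\|L\|_{\mathrm{op}}=1$, which gives $\id-L^{\dagger}L\ge 0$. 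Pairing with the positive operator $\rho$,
\begin{equation*}
\operatorname{Tr}\bigl[(\id-L^{\dagger}L)\rho\bigr]=\operatorname{Tr}\bigl[\rho^{1/2}(\id-L^{\dagger}L)\rho^{1/2}\bigr]\ge 0,
\end{equation*}
which rearranges to $\operatorname{Tr}\mathcal S(\rho)\le\operatorname{Tr}\rho$.

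The only subtle point is making explicit which norm the symbol $\|M\|$ refers to in the definition $L=M/\|M\|$; the trace-non-increasing statement would fail for, say, the Hilbert--Schmidt norm once $d>1$, so the convention must be the spectral/operator norm. Once that is fixed, the multiplicativity of the operator norm on tensor products is the only non-trivial fact needed, and no property of $\mathrm{SL}(d,\mathbb C)$ beyond the finiteness of $\|M\|$ is used.
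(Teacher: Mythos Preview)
Your proof is correct and cleaner than the paper's. The positivity argument is essentially the same in both. For the trace non-increasing part, the paper unpacks each factor via the Cartan (SVD) decomposition $M_i=K_i'A_iK_i$, writes $L_i^{\dagger}L_i=x_i^{-2}K_i^{\dagger}A_i^2K_i$ with $x_i=\|A_i\|$, and then bounds $\operatorname{Tr}(A\tilde\rho)$ entrywise using that every diagonal entry of $A=\bigotimes_i A_i^2$ is at most $\|A\|=\prod_i x_i^2$. Your route bypasses this decomposition entirely by invoking multiplicativity of the operator norm under tensor products to get $\|L\|_{\mathrm{op}}=1$ and hence $L^{\dagger}L\le\id$ directly; this is both shorter and manifestly independent of any special structure of $\mathrm{SL}(d,\mathbb C)$. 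The paper's more explicit computation does pay a dividend later: it makes transparent that the argument carries over verbatim to the general stochastic operation of Section~\ref{Sec:Conj}, where the Cartan decomposition is the organising principle. Your argument covers that case just as well (any normalised invertible matrix has operator norm one), but the paper's version keeps the Cartan machinery visible throughout. Your closing remark that the norm must be the spectral norm is well taken and worth stating explicitly, as the paper leaves this implicit.
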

\begin{proof}
The fact that $\mathcal S$ preserves positivity is a simple consequence of the tensor product commuting with Hermitian transpose. The proof that $\mathcal S$ is trace non-increasing requires some more computation and is presented for completeness in the Appendix \ref{App:LemTrace}.\end{proof}
Both the above properties stated in Lemma \ref{lem:trace} assure that the map \eqref{SLOCCMap} is a well-defined quantum operation. In general the trace of the matrix $\mathcal S(\rho)$ is strictly less than one. Such a state can be physically interpreted as a properly normalized quantum state $\mathcal S(\rho)/\operatorname{Tr}(\mathcal S(\rho))$, which is prepared with success probability $\operatorname{Tr}(\mathcal S(\rho))$. The acceptance of a preparation of this state demands a classical communication between the local observers performing local operations $L_i$, hence the name of the class. 

Let us now introduce the concept of collective averaging over SLOCC operations, which will be done in analogy to the averaging over unitary operations. The term \emph{collective} refers to the fact that all the local operations are equal, therefore we consider the maps:
\begin{equation}
    \mathcal S_\textrm{C}(\rho)=L^{\otimes t}\rho L^{\dagger\otimes t}.
\end{equation}
In the case of unitary operations there is a unique way of defining collective averaging, see formula \eqref{UnitTwirlDef0}.
Generalization of the unitary twirling operation to the SLOCC case meets two problems. The first one is that the operation $S_{\textrm C}(\rho)$ does not preserve the trace, the second one is that the group $\textrm{SL}(d,\mathbb C)$ is non compact, and therefore it is not possible to average over it in a uniform way. 
Let us first focus on the second problem, and further, after defining our averaging process we will tackle the problem with normalisation. Our aim is to define the averaging process over SLOCC operations in a way which preserves as much uniformity as possible. Averaging by integrating over Haar measure on $\textrm{SL}(d,\mathbb C)$ group is not possible, since this measure is not finite and the integral would be divergent. What is more, the group $\textrm{SL}(d,\mathbb C)$ is not an \textit{amenable group} \cite{Bondar81}, which means that it is not possible to define \textit{any} functional on bounded functions defined on this group, which would be: normalised, non-negative definite and left or right invariant with respect to the group action. Therefore any averaging procedure for functions defined on the group $\textrm{SL}(d,\mathbb C)$ must necessarily involve some sort of non-uniformity. In order to minimize it we should try to decompose the group action in order to separate the non-compact part. This idea is realised by the Cartan decomposition for the group $\textrm{SL}(d,\mathbb C)$, which is in this case equivalent to a singular value decomposition (SVD) on the level of matrices. In simple terms this means that an arbitrary matrix $M\in \textrm{SL}(d,\mathbb C)$ can be represented as a product $M=KAK'$, in which matrices $K,K'\in \textrm{SU}(d)$, whereas $A$ is some diagonal traceless real matrix \cite{Knapp, Helgason}, which represents some abelian group $\mathbb A$. The special unitary group $\textrm{SU}(d)$ plays the role of a \textit{maximal compact subgroup} of $\textrm{SL}(d,\mathbb C)$, whereas an abelian group $\mathbb A$ is its \textit{maximal abelian subgroup}. However it is important to note that the direct product group  $\textrm{SU}(d)\times \mathbb A\times \textrm{SU}(d)$ is \textit{not} homomorphic to $\textrm{SL}(d,\mathbb C)$. What is more, the mapping:
\begin{equation}
    \label{kakmap}
    \textrm{SU}(d)\times \mathbb A\times \textrm{SU}(d)\mapsto\textrm{SL}(d,\mathbb C),
\end{equation}
realised by $(K, A, K')\mapsto M$ is \textit{not} a diffeomorphism between the group manifolds  $\textrm{SU}(d)\times \mathbb A\times \textrm{SU}(d)$ and 
$\textrm{SL}(d,\mathbb C)$ due to a mismatch of dimensions. Namely the group manifold corresponding to the direct product $\textrm{SU}(d)\times \mathbb A\times \textrm{SU}(d)$ has higher dimension that the manifold corresponding to $\textrm{SL}(d,\mathbb C)$. For example in the case of $\textrm{SL}(2,\mathbb C)$, $\textrm{SU(2)}$ is parametrised by $3$ real parameters, whereas $\mathbb A$ is one-dimensional and can be represented by matrices of the form: 
\begin{equation}
\label{ASL2C}
\mathbb A_2 = \left\{ \left( \begin{array}{cc} x & 0 \\ 0 & x^{-1}  \end{array}  \right), \,\,x \geq 1  \right\}.
\end{equation}
Therefore the product manifold $\textrm{SU}(2)\times \mathbb A_2\times \textrm{SU}(2)$ has real dimension equal to $7$, whereas the group manifold of $\textrm{SL}(2,\mathbb C)$ is $6$-dimensional. Hence on the one hand Cartan decomposition allows us to decompose the SLOCC-type operations into the maximally compact and non-compact parts, on the other hand the mismatch of dimensions in the mapping \eqref{kakmap} complicates the averaging process. Nevertheless, since any smooth function $f$ on $\textrm{SL}(d,\mathbb C)$ can be expressed by a smooth function $\tilde f$ on $\textrm{SU}(d)\times \mathbb A\times \textrm{SU}(d)$, we can \textit{define} an integral over  $\textrm{SL}(d,\mathbb C)$ by an \textit{iterated integral} over the product manifold corresponding to $\textrm{SU}(d)\times \mathbb A\times \textrm{SU}(d)$:
\begin{equation}
\label{SLint}
   \int_{\textrm{SL}(d,\mathbb C)}fd\mu(M)\equiv \int_{\textrm{SU}(d)\times \mathbb A\times \textrm{SU}(d)}\tilde fd\mu(K)d\mu(A)d\mu(K').
\end{equation}
The above definition should be understood as follows: by fixing a measure $d\mu(K)$ on the group manifold of $\textrm{SU}(d)$ and $d\mu(A)$ on the group manifold of $\mathbb A$ we \textit{define} a measure $d\mu(M)$ on the group manifold of $\textrm{SL}(d,\mathbb C)$ and therefore the right-hand-side integral can be treated as an integral over $\textrm{SL}(d,\mathbb C)$. Therefore we propose the following definition:
\begin{deff}
\label{def1}
Generalisation of the unitary collective twirling to the SLOCC twirling can be defined by the following map:
\begin{eqnarray}
  \mathcal T_{\operatorname{SL}}(\rho)=&&\int \left(\frac{KAK'}{||KAK'||}\right)^{\otimes t}\rho \left(\frac{KAK'}{||KAK'||}\right)^{\dagger\otimes t}\nonumber\\
    &&d\mu(K)d\mu(A)d\mu(K'),
    \label{mainTwirlSL1}
\end{eqnarray}
in which in the iterated integral the measure $d\mu(K)$ on the group manifold of $\textrm{SU(d)}$ is assumed to be the normalised Haar measure, whereas the measure $d\mu(A)$ is an arbitrary normalised measure on the group manifold of $\mathbb A$.
\end{deff}
Due to \eqref{SLint} the map \eqref{mainTwirlSL1} can be equivalently represented as:
\begin{equation}
    \mathcal T_{\textrm SL}(\rho)=\int \left(\frac{M}{||M||}\right)^{\otimes t}\rho \left(\frac{M^{\dagger}}{||M||}\right)^{\otimes t} d\mu(M),
    \label{mainTwirlSL}
\end{equation}
 in which $M\in \textrm{SL}(d,\mathbb C)$ and the measure $d\mu(M)$ on the group manifold of $\textrm{SL}(d,\mathbb C)$ is defined via relation \eqref{SLint}. Proposed definition assures as much uniformity of averaging over SLOCC operations as possible, namely the averaging over compact components of SLOCC operations is performed uniformly according to a Haar measure on the components. All the non-uniformity is shifted to averaging over non-compact component consisting of matrices $A\in \mathbb A$, interpreted physically as \textit{filtering operations} \cite{Kent99, Verstraete01, Avron09}. Very similar definition has been suggested in our previous work on construction of finite averaging sets for the group $\textrm{SL}(d,\mathbb C)$ \cite{Markiewicz21}, however with an important difference. There we utilised relation \eqref{SLint} in the opposite direction, namely we first defined the (divergent) averaging process by the left-hand-side with respect to the Haar measure on $\textrm{SL}(d,\mathbb C)$, translated this Haar measure to the product measure in the right-hand-side iterated integral and finally added a suppressing factor in the component of the Haar measure on $\mathbb A$ in order to assure convergence. Here we do not impose \textit{any} restriction on the measure $d\mu(A)$ other than that it has to be normalised.
 A similar concept of averaging to Definition \ref{def1} has been proposed in \cite{Serafini07} for defining averaging of optical states over Gaussian operations represented by symplectic matrices from a non-compact symplectic group $\textrm{Sp}(d,\mathbb R)$.
 
 Let us now make sure that defined map \eqref{mainTwirlSL} is a properly defined quantum operation:
 \begin{lem}
 \label{lem:slgen}
 The map $\mathcal T_{\operatorname{SL}}$ preserves the positivity of the density matrix $\rho$, and is trace non-increasing.
 \end{lem}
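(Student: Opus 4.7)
The plan is to deduce both claims by applying Lemma~\ref{lem:trace} pointwise inside the iterated integral of Definition~\ref{def1}. First I would observe that for every fixed triple $(K,A,K')\in\textrm{SU}(d)\times\mathbb A\times\textrm{SU}(d)$ the matrix $L=KAK'/\|KAK'\|$ is a normalized $\textrm{SL}(d,\mathbb C)$ matrix, so $L^{\otimes t}\rho L^{\dagger\otimes t}$ is precisely an instance of the collective SLOCC map $\mathcal S_\textrm{C}$ addressed by Lemma~\ref{lem:trace}. Consequently, at every point of the integration domain the integrand is positive semidefinite and satisfies $\operatorname{Tr}(L^{\otimes t}\rho L^{\dagger\otimes t})\le\operatorname{Tr}(\rho)$.

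For positivity preservation, I would fix an arbitrary vector $\ket{\psi}$ and write
\begin{equation*}
\bra{\psi}\mathcal T_{\operatorname{SL}}(\rho)\ket{\psi}=\int \bra{\psi}L^{\otimes t}\rho L^{\dagger\otimes t}\ket{\psi}\,d\mu(K)d\mu(A)d\mu(K'),
\end{equation*}
which is valid because the scalar integrand is continuous and uniformly bounded on the product of finite measure spaces. Pointwise non-negativity of the integrand together with positivity of the product measure then forces the left-hand side to be non-negative for every $\ket{\psi}$, giving $\mathcal T_{\operatorname{SL}}(\rho)\ge 0$.

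The trace bound follows analogously by exchanging trace and integration: applying Fubini one obtains $\operatorname{Tr}[\mathcal T_{\operatorname{SL}}(\rho)]=\int\operatorname{Tr}[L^{\otimes t}\rho L^{\dagger\otimes t}]\,d\mu(K)d\mu(A)d\mu(K')$, and inserting the pointwise bound from Lemma~\ref{lem:trace} together with the fact that $d\mu(K)$ is a normalized Haar measure and $d\mu(A)$ is normalized by assumption (so the product measure has total mass one) yields $\operatorname{Tr}[\mathcal T_{\operatorname{SL}}(\rho)]\le\operatorname{Tr}(\rho)$. I do not expect any real obstacle here: the only subtlety is the justification of Fubini and of the swap of scalar product with the integral, but this is immediate because normalization enforces $\|L\|_\textrm{op}\le 1$, so the integrand is dominated in trace norm by $\|\rho\|_1$ on a product of probability spaces. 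The entire argument therefore reduces to invoking Lemma~\ref{lem:trace} under the integral sign.
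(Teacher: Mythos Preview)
Your proposal is correct and follows essentially the same approach as the paper: both arguments write $\mathcal T_{\operatorname{SL}}(\rho)$ as an integral of collective SLOCC maps $\mathcal S_{\textrm C}(\rho)$ and then invoke Lemma~\ref{lem:trace} pointwise under the integral to obtain positivity and the trace bound. Your version is somewhat more explicit (spelling out the $\bra{\psi}\cdot\ket{\psi}$ argument and the Fubini justification), but the underlying idea is identical.
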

 \begin{proof}
 Since  $\mathcal T_{\textrm{SL}}(\rho)=\int \mathcal S_{\textrm C}(\rho)d\mu(M)$, and according to Lemma \ref{lem:trace} $S_{\textrm C}$ preserves positivity, $\mathcal T_{\textrm{SL}}$ as a probabilistic mixture of  $S_{\textrm C}$ also preserves positivity. The fact that it is also trace non-increasing follows simply from:
\begin{eqnarray}
\Tr\left(\mathcal T_{\textrm{SL}}(\rho)\right)&=&\Tr\left(\int \mathcal S_{\textrm C}(\rho)d\mu(M)\right)\nonumber\\
&=&\int \Tr\left(\mathcal S_{\textrm C}(\rho)\right)d\mu(M)\leq\int d\mu(M)=1,\nonumber\\
\end{eqnarray}
in which the inequality follows from the fact that $\Tr\left(\mathcal S_{\textrm C}(\rho)\right)\leq 1$ according to Lemma \ref{lem:trace}.
 \end{proof}
 The second issue is that the above defined map involves mixing of subnormalized states, therefore its physical interpretation is not clear at first glance. In order to provide such interpretation let us rewrite the map \eqref{mainTwirlSL} as follows:
 \begin{equation}
    \mathcal T_{\textrm{SL}}(\rho)=\int p_M(\rho)\frac{M^{\otimes t}\rho M^{\dagger\otimes t}}{\operatorname{Tr}(M^{\otimes t}\rho M^{\dagger\otimes t})}d\mu(M),
\end{equation}
in which the probability $p_M(\rho)$ reads:
\begin{equation}
    \label{succprob}
    p_M(\rho)=\frac{\operatorname{Tr}(M^{\otimes t}\rho M^{\dagger\otimes t})}{\|M\|^{2t}},
\end{equation}
and can be interpreted as a success probability of performing the SLOCC operation \cite{Avron09}. Therefore the entire map $\mathcal T_{\textrm{SL}}$ represents preparation of  an ensamble of properly normalised states $\frac{M^{\otimes t}\rho M^{\dagger\otimes t}}{\operatorname{Tr}(M^{\otimes t}\rho M^{\dagger\otimes t})}$ produced by the SLOCC operations $M$, for $M$ drawn randomly from the group $\textrm{SL}(d,\mathbb C)$ according to a probabilistic measure $d\mu(M)$, \textit{weighted} according to the success probability $p_M(\rho)$ of the respective drawn SLOCC operation $M$.
The normalisation of the output state of the map \eqref{mainTwirlSL} $\Tr(\mathcal T_{\textrm{SL}}(\rho))$ represents average success probability of the SLOCC operations:
\begin{equation}
    \label{pmav}
    \langle p_M(\rho)\rangle=\Tr(\mathcal T_{\textrm{SL}}(\rho))=\int p_M(\rho)d\mu(M).
\end{equation}

\section{Exact form of the twirling map for SLOCC operations}
\label{Sec:ExactSLOCC}

In this section we will calculate closed analytic form of the proposed SLOCC twirling map \eqref{mainTwirlSL1}. 
Our derivation utilizes the closed form of  the unitary twirling map \eqref{fullUnitTwirlMap0}, which is derived in Appendix \ref{App:UnitTwirl}. The main tool used there is the application of Schur's Lemmas based on invariance of the unitary twirling procedure with respect to fixed left and right unitary rotations, see Appendix \ref{App:SLemmas} for details. In the case of non-unitary averaging such invariance no longer holds, therefore we have to exploit other techniques related with the structure of the Cartan decomposition in \eqref{mainTwirlSL1}, which we have successfully used in the context of finding finite averaging sets for $\textrm{SL}(2,\mathbb C)$ group \cite{Markiewicz21}. These techniques allow us to prove a factorisation of the SLOCC twirling map \eqref{mainTwirlSL1} into two terms, the first one being the unitary twirling map \eqref{UnitTwirlDef0} applied to the input state, whereas the second being the unitary twirling map applied solely to the non-unitary part of  SLOCC operations.

Let us start by rewriting the map \eqref{mainTwirlSL1} in the following way.
Since $||KAK'||=||A||$ (which follows from the fact that the Cartan decomposition for $\textrm{SL}(d,\mathbb C)$ matrices is equivalent to a singular value decomposition), we can introduce normalised filtering matrix $A_{\textrm{n}}=A/||A||$, which leads to the following form of the map:
\begin{eqnarray}
    \mathcal T_{\textrm{SL}}(\rho)=\int&& \left(KA_{\textrm{n}}K'\right)^{\otimes t}\rho \left(KA_{\textrm{n}}K'\right)^{\dagger\otimes t}\nonumber\\
    &&d\mu(K)d\mu(A)d\mu(K').
    \label{mainTwirlSL2}
\end{eqnarray}
Using distributivity of the tensor product with respect to the product of matrices the above map can be decomposed in the following form:
\begin{eqnarray}
    \mathcal T_{\textrm{SL}}(\rho)=\int&& K^{\otimes t}A_{\textrm{n}}^{\otimes t}\left(\int K'^{\otimes t}\rho K'^{\dagger\otimes t}d\mu(K')\right)\nonumber\\
    &&A_{\textrm{n}}^{\dagger\otimes t}K^{\dagger\otimes t}d\mu(K)d\mu(A).
    \label{mainTwirlSL30}
\end{eqnarray}
As can be seen in the above formula a twirling operation with respect to a special unitary group $\textrm{SU}(d)$ appeared in the middle integral. In our previous work \cite{Markiewicz21} we have shown that such a twirling operation is equivalent to a twirling map with respect to the  entire unitary group $\mathcal T_{\operatorname{U}}(\rho)$ \eqref{UnitTwirlDef0}, therefore the  map \eqref{mainTwirlSL30} can be simplified to:
\begin{equation}
    \mathcal T_{\textrm{SL}}(\rho)=\int K^{\otimes t}A_{\textrm{n}}^{\otimes t}\mathcal T_{\operatorname{U}}(\rho)
    A_{\textrm{n}}^{\dagger\otimes t}K^{\dagger\otimes t}d\mu(K)d\mu(A).
    \label{mainTwirlSL3}
\end{equation}
Now it turns out that the unitarily twirled state $\mathcal T_{\operatorname{U}}(\rho)$
commutes with all other operators in the integral due to the following lemma:
\begin{lem}
\label{lem1}
Let us take any invertible $d\times d$ matrix $\alpha$ and any $d^t\times d^t$ matrix $B$. Then the matrices $\alpha^{\otimes t}$ and $\mathcal T_{\operatorname{U}}(B)$ commute for every $t$ and $d$.
\end{lem}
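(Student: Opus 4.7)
The plan is to reduce the lemma to a direct consequence of Schur--Weyl duality. First I would establish that $\mathcal T_{\operatorname{U}}(B)$ sits in the commutant of the collective unitary algebra $\{U^{\otimes t}\}_{U\in\operatorname{U}(d)}$. This follows from the left-invariance of the Haar measure: for any $V\in\operatorname{U}(d)$, the substitution $U\mapsto V^{\dagger}U$ gives
\begin{equation}
V^{\otimes t}\mathcal T_{\operatorname{U}}(B)V^{\dagger\otimes t}=\int (VU)^{\otimes t} B (VU)^{\dagger\otimes t}\,dU=\mathcal T_{\operatorname{U}}(B),
\end{equation}
so $V^{\otimes t}$ commutes with $\mathcal T_{\operatorname{U}}(B)$ for every unitary $V$.

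Next I would invoke Schur--Weyl duality exactly as recalled in the preceding section: the commutant of the collective unitary algebra inside $\operatorname{End}((\mathbb C^d)^{\otimes t})$ is the algebra spanned by the permutation operators $\{O_p\}_{p\in\operatorname{S}_t}$. Combined with the previous step, this yields an expansion
\begin{equation}
\mathcal T_{\operatorname{U}}(B)=\sum_{p\in\operatorname{S}_t}c_p\,O_p
\end{equation}
for some scalar coefficients $c_p$ depending on $B$.

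Finally I would show that $\alpha^{\otimes t}$ commutes with every permutation operator $O_p$, for \emph{any} $d\times d$ matrix $\alpha$ (invertibility is not actually needed here). This is the standard, and geometrically obvious, half of Schur--Weyl: $O_p$ merely permutes tensor factors, whereas $\alpha^{\otimes t}$ acts by the same $\alpha$ in every factor, so they commute. Explicitly, on a product vector $\ket{v_1}\otimes\cdots\otimes\ket{v_t}$ one has
\begin{equation}
O_p\,\alpha^{\otimes t}\ket{v_1\otimes\cdots\otimes v_t}=\ket{\alpha v_{p^{-1}(1)}\otimes\cdots\otimes\alpha v_{p^{-1}(t)}}=\alpha^{\otimes t}\,O_p\ket{v_1\otimes\cdots\otimes v_t}.
\end{equation}
Combining the three steps, $\alpha^{\otimes t}$ commutes with every $O_p$ and hence with the linear combination $\sum_p c_p O_p=\mathcal T_{\operatorname{U}}(B)$, completing the proof.

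I do not expect any genuine obstacle: the only delicate point is that the statement concerns an arbitrary (possibly non-unitary) matrix $\alpha$, yet the unitary twirl enforces a structure depending only on the $\operatorname{U}(d)$ action. Schur--Weyl duality bridges this gap cleanly by moving $\mathcal T_{\operatorname{U}}(B)$ into the permutation algebra, where the question reduces to the trivial observation that permutations commute with diagonal tensor actions. The hypothesis that $\alpha$ be invertible is therefore not essential to the argument, but is convenient for the intended application to the Cartan factor $A_{\textrm n}$ in equation \eqref{mainTwirlSL3}.
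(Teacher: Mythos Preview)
Your proof is correct but proceeds along a different route from the paper's. The paper argues by expanding both operators in the Schur operator bases: $\alpha^{\otimes t}$ in the basis $\{\hat\Pi_k^{m_1m_2}\}$ (formula \eqref{GLDecomp}) and $\mathcal T_{\operatorname{U}}(B)$ in the dual basis $\{\hat\Pi_k^{\lambda_1\lambda_2}\}$ (formula \eqref{fullUnitTwirlMap0}), and then invokes the precomputed commutation relation $[\hat\Pi_k^{m_1m_2},\hat\Pi_l^{\lambda_1\lambda_2}]=0$ established in the Appendix. Your argument bypasses this machinery entirely: Haar invariance places $\mathcal T_{\operatorname{U}}(B)$ in the commutant of the collective unitary algebra, Schur--Weyl duality identifies that commutant as the span of the $O_p$, and then the elementary fact that diagonal tensor actions commute with factor permutations finishes the job. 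Your approach is more self-contained and makes clear that invertibility of $\alpha$ is irrelevant to the lemma itself (the paper needs it only because the decomposition \eqref{GLDecomp} is stated for $\operatorname{GL}(d,\mathbb C)$); the paper's approach, on the other hand, dovetails with the operator-basis formalism used throughout the rest of the argument and generalises verbatim to the abstract dual-reductive-pair setting of Section~\ref{Sec:Conj}.
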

\begin{proof}
The $t$-th tensor power of arbitrary invertible matrix $\alpha$ can be represented in a block-diagonal form equivalent to the one of $U^{\otimes t}$ (see formula \eqref{GLDecomp} in the Appendix \ref{App:SWDuality}):
\begin{equation}
\label{alphaBlockDiag}
    \alpha^{\otimes t}=\sum_k\frac{1}{D^k_V}\sum_{m_1,m_2=1}^{D^k_L}\alpha^k_{m_1m_2}\hat\Pi^{m_1m_2}_k,
\end{equation}
in which $\alpha^k_{m_1m_2}=\operatorname{Tr}\left(\alpha^{\otimes t}\hat\Pi^{m_1m_2\dagger}_k\right)$, whereas the 
unitarily twirled operator $B$ has the form \eqref{fullUnitTwirlMap0}:
\begin{equation}
    \mathcal T_{\operatorname{U}}(B)=\sum_k\frac{1}{D^k_L}\sum_{\lambda_1,\lambda_2=1}^{D^k_V}B^k_{\lambda_1\lambda_2}\hat\Pi^{\lambda_1\lambda_2}_k,
\end{equation}
in which $B^k_{\lambda_1\lambda_2}=\operatorname{Tr}\left(B\hat\Pi^{\lambda_1\lambda_2\dagger}_k\right)$. Due to commutativity of the projection operators $\hat\Pi^{m_1m_2}_k$ and $\hat\Pi^{\lambda_1\lambda_2}_l$, see \eqref{PiCommute} in Appendix \ref{App:SWDuality}, we have:
\begin{eqnarray}
\left[\alpha^{\otimes t},\mathcal T_{\operatorname{U}}(B)\right]&=&\left[\frac{1}{D^k_V}\alpha^k_{m_1m_2}\hat\Pi^{m_1m_2}_k,\frac{1}{D^l_L}B^l_{\lambda_1\lambda_2}\hat\Pi^{\lambda_1\lambda_2}_l\right]\nonumber\\
&=&\frac{1}{D^k_VD^l_L}\alpha^k_{m_1m_2}B^l_{\lambda_1\lambda_2}\left[\hat\Pi^{m_1m_2}_k,\hat\Pi^{\lambda_1\lambda_2}_l\right]=0,\nonumber\\
\end{eqnarray}
in which in the above we used Einstein's summation convention to simplify the notation.
\end{proof}
Due to Lemma \ref{lem1} the operator $\mathcal T_{\operatorname{U}}(\rho)$ commutes with both operators $K^{\otimes t}$ and $A_{\textrm{n}}^{\otimes t}$, therefore it can be factored out of the integral in the formula \eqref{mainTwirlSL3}:
\begin{equation}
    \mathcal T_{\textrm{SL}}(\rho)=\mathcal T_{\operatorname{U}}(\rho)\int K^{\otimes t}\left(\int A_n^{\otimes t}
    A_n^{\dagger\otimes t}d\mu(A)\right)K^{\dagger\otimes t}d\mu(K).
    \label{mainTwirlSL4}
\end{equation}
Let us now introduce a compact notation for the inner integral:
\begin{equation}
    \mathcal A=\int A_{\textrm{n}}^{\otimes t}
    A_{\textrm{n}}^{\dagger\otimes t}d\mu(A)=\int \left(A_{\textrm{n}}^{\otimes t}\right)^2
  d\mu(A),
    \label{Aop}
\end{equation}
in which the second equality comes from the fact that $A_{\textrm{n}}$ is a real diagonal matrix. Finally the map \eqref{mainTwirlSL4} can be presented in a  factorised form:
\begin{equation}
    \mathcal T_{\textrm{SL}}(\rho)=\mathcal T_{\operatorname{U}}(\rho)\mathcal T_{\operatorname{U}}(\mathcal A),
    \label{mainTwirlSL5}
\end{equation}
in which the term $\mathcal T_{\operatorname{U}}(\mathcal A)$ plays the role of a \textit{correction} due to non-unitary character of SLOCC averaging.
The operator $\mathcal T_{\operatorname{U}}(\mathcal A)$ has a very special form, which is explicitly provided by the following lemma:
\begin{lem}
\label{lem2}
Let $\alpha$ be arbitrary $d\times d$ complex invertible matrix. Then the operator $\mathcal T_{\operatorname{U}}(\alpha^{\otimes t})$ is fully diagonal in  \textit{Schur operator basis}:
\begin{equation}
\label{twirledProdOp}
\mathcal T_{\operatorname{U}}(\alpha^{\otimes t})=\sum_k \frac{1}{D^k}\alpha_k\hat\Pi_k,  
\end{equation}
in which the coefficients read: $\alpha_k=\Tr(\alpha^{\otimes t}\hat\Pi_k^\dagger)$, $\hat\Pi_k$ is a projector onto the $k$-th  subspace $P_k$ \eqref{pspace} and $D^k$ is the dimension of the subspace $P_k$: $D^k=D^k_LD^k_V$.
\end{lem}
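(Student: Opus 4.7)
The plan is to exploit the fact that $\alpha^{\otimes t}$ commutes with every tensor permutation operator $O_p$, since permuting identical tensor factors leaves the product unchanged. I would then observe that the unitarily twirled operator $\mathcal T_{\operatorname U}(\alpha^{\otimes t})$ inherits this commutation: because $[U^{\otimes t}, O_p] = 0$ for all $U$ and $p$, conjugation by $O_p$ passes through the Haar integral, yielding $O_p\, \mathcal T_{\operatorname U}(\alpha^{\otimes t})\, O_p^{-1} = \mathcal T_{\operatorname U}(\alpha^{\otimes t})$. Combined with the defining property of the twirl---that its output commutes with every $U^{\otimes t}$---this places $\mathcal T_{\operatorname U}(\alpha^{\otimes t})$ simultaneously in the commutants of both algebras of the Schur--Weyl dual pair.

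By Schur--Weyl duality (equivalently, the double commutant theorem applied to the mutually commuting algebras generated by $\{U^{\otimes t}\}$ and $\{O_p\}$), this joint commutant is the center of either algebra, which in the notation of \eqref{SchurBasis0}--\eqref{pspace} is spanned precisely by the isotypic projectors $\hat\Pi_k$ onto the subspaces $P_k$. Hence I may write $\mathcal T_{\operatorname U}(\alpha^{\otimes t}) = \sum_k c_k \hat\Pi_k$ for scalars $c_k$. To fix the coefficients I would take the Hilbert--Schmidt inner product of both sides with $\hat\Pi_k$, using the self-adjointness of $\mathcal T_{\operatorname U}$ under this inner product (which follows from the invariance of the Haar measure under $U\mapsto U^{-1}$) together with $\mathcal T_{\operatorname U}(\hat\Pi_k) = \hat\Pi_k$, since $\hat\Pi_k = \sum_\lambda \hat\Pi^{\lambda\lambda}_k$ already lies in the commutant of the unitary action. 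This yields $c_k\, \operatorname{Tr}(\hat\Pi_k) = \operatorname{Tr}(\alpha^{\otimes t}\hat\Pi_k) = \alpha_k$, and since $\operatorname{Tr}(\hat\Pi_k) = \dim P_k = D^k_L D^k_V = D^k$, the claim $c_k = \alpha_k/D^k$ follows.

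A more computational route would substitute the Schur-basis expansion \eqref{alphaBlockDiag} of $\alpha^{\otimes t}$ directly into the unitary twirl \eqref{fullUnitTwirlMap0} and reduce using the orthogonality relation $\operatorname{Tr}(\hat\Pi^{m_1 m_2}_l \hat\Pi^{\lambda_1 \lambda_2\dagger}_k) = \delta_{lk}\delta_{m_1 m_2}\delta_{\lambda_1 \lambda_2}$, which follows directly from the definitions \eqref{FullPiBasis0}--\eqref{PiBasis0}. The main subtlety, regardless of route, is ensuring that the off-diagonal terms vanish: conceptually this happens because $\alpha^{\otimes t}$ lives in the algebra generated by $\{\hat\Pi^{m_1 m_2}_k\}$ while $\mathcal T_{\operatorname U}$ projects onto the algebra generated by $\{\hat\Pi^{\lambda_1\lambda_2}_k\}$, and the overlap of these two algebras is exactly the central subalgebra spanned by $\{\hat\Pi_k\}$. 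I expect this identification of the joint commutant with the span of the isotypic projectors to be the only conceptually nontrivial step; everything else is bookkeeping with the Schur operator bases.
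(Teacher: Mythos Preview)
Your proposal is correct. The paper's proof takes exactly your ``computational route'': it substitutes the Schur-basis expansion \eqref{alphaBlockDiag} of $\alpha^{\otimes t}$ into the closed form \eqref{fullUnitTwirlMap0} of the unitary twirl, applies the orthogonality relation $\operatorname{Tr}\bigl(\hat\Pi^{m_1 m_2}_l \hat\Pi^{\lambda_1 \lambda_2\dagger}_k\bigr) = \delta_{kl}\delta_{m_1 m_2}\delta_{\lambda_1\lambda_2}$, and then collapses the sums using $\sum_\lambda \hat\Pi_k^{\lambda\lambda} = \sum_m \hat\Pi_k^{mm} = \hat\Pi_k$. Your primary (conceptual) route is genuinely different: rather than computing, you argue that $\mathcal T_{\operatorname U}(\alpha^{\otimes t})$ lies in the joint commutant of the two Schur--Weyl algebras and hence in their common center, spanned by the isotypic projectors $\hat\Pi_k$, and then you read off the coefficients via the Hilbert--Schmidt pairing and the self-adjointness of the twirl. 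This is cleaner and makes the structural reason for diagonality transparent---the twirl projects onto one commutant while $\alpha^{\otimes t}$ already lies in the other, so the image lands in the center---whereas the paper's direct computation is more self-contained (no appeal to the double-commutant theorem) and reuses only the explicit basis identities already established in the appendix. Both arrive at the same coefficient determination $c_k D^k = \operatorname{Tr}(\alpha^{\otimes t}\hat\Pi_k)$.
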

\begin{proof}
See Appendix \ref{proof:lem2}.
\end{proof}
From Lemma \ref{lem2} it follows that:
\begin{equation}
\label{twirledA}
\mathcal T_{\operatorname{U}}(\mathcal A)=\sum_k \frac{1}{D^k}\beta_k\hat\Pi_k,
\end{equation}
in which  $\beta_k=\Tr(\mathcal A\hat\Pi_k^\dagger)$. This follows from linearity of integration and independence of the order of integration due to Fubini's theorem.
Indeed, we have:
\begin{eqnarray}
\mathcal T_{\operatorname{U}}(\mathcal A)&=&\mathcal T_{\operatorname{U}}\left(\int \left(A_{\textrm{n}}^{\otimes t}\right)^2
  d\mu(A)\right)=\int \mathcal T_{\operatorname{U}}\left(\left(A_{\textrm{n}}^{\otimes t}\right)^2\right)
  d\mu(A),\nonumber\\
  &=&\int\left[ \sum_k\frac{1}{D^k}\Tr\left(\left(A_{\textrm{n}}^{\otimes t}\right)^2\hat\Pi_k^\dagger\right)\hat\Pi_k\right]\,
  d\mu(A)\nonumber\\
  &=&\sum_k\frac{1}{D^k}\Tr\left(\left[\int\left(A_{\textrm{n}}^{\otimes t}\right)^2d\mu(A)\right]\hat\Pi_k^\dagger\right)\hat\Pi_k\,\nonumber\\
  &=&\sum_k\frac{1}{D^k}\Tr\left(\mathcal A\hat\Pi_k^\dagger\right)\hat\Pi_k.
\end{eqnarray}
Using  general form of the unitary twirling map  \eqref{fullUnitTwirlMap0} and the formula \eqref{twirledA}, we can finally simplify the map \eqref{mainTwirlSL5}:
\begin{eqnarray}
    &&\mathcal T_{\textrm{SL}}(\rho)=\mathcal T_{\operatorname{U}}(\rho)\mathcal T_{\operatorname{U}}(\mathcal A)\nonumber\\
    &&=\sum_{k}\frac{1}{D^k_L}\sum_{\lambda_1\lambda_2=1}^{D^k_V}\operatorname{Tr}\left(\rho\hat\Pi_{k}^{\lambda_1\lambda_2\dagger}\right)\hat\Pi_{k}^{\lambda_1\lambda_2}\sum_l \frac{1}{D^l}\beta_l\hat\Pi_l\nonumber\\
     &&=\sum_{k,l}\frac{1}{D^k_L}\left(\frac{\beta_l}{D^l}\right)\sum_{\lambda_1\lambda_2=1}^{D^k_V}\operatorname{Tr}\left(\rho\hat\Pi_{k}^{\lambda_1\lambda_2\dagger}\right)\hat\Pi_{k}^{\lambda_1\lambda_2}\hat\Pi_l\nonumber\\
     &&=\sum_{k,l}\frac{1}{D^k_L}\left(\frac{\beta_l}{D^l}\right)\sum_{\lambda_1\lambda_2=1}^{D^k_V}\operatorname{Tr}\left(\rho\hat\Pi_{k}^{\lambda_1\lambda_2\dagger}\right)\hat\Pi_{k}^{\lambda_1\lambda_2}\delta_{kl}\nonumber\\
     &&=\sum_{k}\frac{1}{D^k_L}\left(\frac{\beta_k}{D^k}\right)\sum_{\lambda_1\lambda_2=1}^{D^k_V}\operatorname{Tr}\left(\rho\hat\Pi_{k}^{\lambda_1\lambda_2\dagger}\right)\hat\Pi_{k}^{\lambda_1\lambda_2}.
    \label{mainTwirlSL6}
\end{eqnarray}
In the fourth line we utilised the property of operators $\hat \Pi_k^{\lambda_1\lambda_2}$ and $\hat\Pi_l$ shown in Appendix \ref{App:SWDuality}, formulas \eqref{PiIdent}.
If we express the closed form \eqref{fullUnitTwirlMap0} of the unitary twirling map in the compact form:
\begin{equation}
    \label{UnitTwirlSimp}
    \mathcal T_{\operatorname{U}}(\rho)=\sum_k\mathcal T_{\operatorname{U}}^{(k)}(\rho),
\end{equation}
in which:
\begin{equation}
    \label{UnitTwirlSimpK}
    T_{\operatorname{U}}^{(k)}(\rho)=\frac{1}{D^k_L}\sum_{\lambda_1\lambda_2=1}^{D^k_V}\operatorname{Tr}\left(\rho\hat\Pi_{k}^{\lambda_1\lambda_2\dagger}\right)\hat\Pi_{k}^{\lambda_1\lambda_2},
\end{equation}
the  SLOCC twirling map can be represented as:
\begin{equation}
    \label{SloccTwirlSimp}
    \mathcal T_{\textrm{SL}}(\rho)=\sum_k\left(\frac{\beta_k}{D^k}\right)\mathcal T_{\operatorname{U}}^{(k)}(\rho).
\end{equation}
It can be easily seen that all the impact of averaging over non-compact part of the $\textrm{SL}(d,\mathbb C)$ group is expressed in the rescaling of each of the component defined on an irreducible subspace by a factor $\left(\frac{\beta_k}{D^k}\right)$, in which $\beta_k=\Tr\left(\mathcal A\hat\Pi_k^\dagger\right)$ is a projection of an integral over the non-compact part of the SLOCC operation \eqref{Aop} onto the $k$-th irreducible subspace. This term directly depends on the integration measure  over the non-compact part $d\mu(A)$. An important difference with respect to the unitary case is that the SLOCC twirling map is not idempotent:
\begin{eqnarray}
\label{sloccnoidem}
 \mathcal T_{\textrm{SL}}( \mathcal T_{\textrm{SL}}(\rho))&=&\sum_k\frac{\beta_k}{D^k}\mathcal{T}_{\operatorname{U}}^{(k)}\left(\sum_l\frac{\beta_l}{D^l}\mathcal T_{\operatorname{U}}^{(l)}(\rho)\right)\nonumber\\
 &=&\sum_{kl}\frac{\beta_k\beta_l}{D^kD^l}\mathcal{T}_{\operatorname{U}}^{(k)}\left(\mathcal T_{\operatorname{U}}^{(l)}(\rho)\right)\nonumber\\
 &=&\sum_k\left(\frac{\beta_k}{D^k}\right)^2\mathcal T_{\operatorname{U}}^{(k)}(\rho).\nonumber\\
\end{eqnarray}
By induction a general rule holds:
\begin{equation}
    \label{sloccmapp}
     \mathcal{T}_{\textrm{SL}}^{m}(\rho)=\sum_k\left(\frac{\beta_k}{D^k}\right)^m\mathcal T_{\operatorname{U}}^{(k)}(\rho),
\end{equation}
in which:
\begin{equation}
\label{TSLm}
\mathcal T_{\textrm{SL}}^{m}=\underbrace{T_{\textrm{SL}}\circ\ldots\circ T_{\textrm{SL}}}_{m}.
\end{equation}
We can also easily calculate mean success probability $\langle p_M(\rho)\rangle$ of the SLOCC operations $M=KAK'$ over which we average, equal to the trace of $\mathcal T_{\textrm{SL}}(\rho)$ \eqref{pmav}:
\begin{eqnarray}
    \label{SloccTwirlSimpTr1}
    \langle p_M(\rho)\rangle= \Tr\left( \mathcal T_{\textrm{SL}}(\rho)\right)&=&\sum_k\left(\frac{\beta_k}{D^k}\right)\Tr\left(\mathcal T_{\operatorname{U}}^{(k)}(\rho)\right)\nonumber\\
    &=&\sum_k\left(\frac{\beta_k}{D^k}\right)\Tr\left(\rho\hat\Pi_k^\dagger\right).
\end{eqnarray}
This follows from the following transformations:
\begin{eqnarray}
&&\Tr\left(\mathcal T_{\operatorname{U}}^{(k)}(\rho)\right)=\Tr\left(\frac{1}{D^k_L}\sum_{\lambda_1\lambda_2=1}^{D^k_V}\operatorname{Tr}\left(\rho\hat\Pi_{k}^{\lambda_1\lambda_2\dagger}\right)\hat\Pi_{k}^{\lambda_1\lambda_2}\right)\nonumber\\
&&=\frac{1}{D^k_L}\sum_{\lambda_1\lambda_2=1}^{D^k_V}\operatorname{Tr}\left(\rho\hat\Pi_{k}^{\lambda_1\lambda_2\dagger}\right)\Tr\left(\hat\Pi_{k}^{\lambda_1\lambda_2}\right)\nonumber\\
&&=\frac{1}{D^k_L}\sum_{\lambda_1\lambda_2=1}^{D^k_V}\operatorname{Tr}\left(\rho\hat\Pi_{k}^{\lambda_1\lambda_2\dagger}\right)D^k_L\delta_{\lambda_1\lambda_2}\nonumber\\
&&=\sum_{\lambda=1}^{D^k_V}\operatorname{Tr}\left(\rho\hat\Pi_{k}^{\lambda\lambda\dagger}\right)=\operatorname{Tr}\left(\rho\sum_{\lambda=1}^{D^k_V}\hat\Pi_{k}^{\lambda\lambda\dagger}\right)=\operatorname{Tr}\left(\rho\hat\Pi_{k}^\dagger\right).\nonumber\\
\end{eqnarray}
We can summarise all presented derivations in the following theorem:
\begin{teor}
\label{maintheor}
The  collective SLOCC twirling map defined by formula \eqref{mainTwirlSL1} is represented by the following closed analytical form:
\begin{equation}
\label{finmapslteor}
    \mathcal T_{\operatorname{SL}}(\rho)=\sum_k\left(\frac{\beta_k}{D^k}\right)\mathcal T_{\operatorname{U}}^{(k)}(\rho),
\end{equation}
in which $\mathcal T_{\operatorname{U}}^{(k)}$ \eqref{UnitTwirlSimpK} is the unitary  twirling map acting on $k$-th irreducible subspace of the tensor space $(\mathbb C^d)^{\otimes t}$, $D^k$ is the dimension of the subspace, whereas the coefficient
\begin{equation}
\label{betafin}
\beta_k=\Tr\left(\left[\int\left(A_{\operatorname{n}}^{\otimes t}\right)^2d\mu(A)\right]\hat\Pi_k^\dagger\right)
\end{equation}
is a projection of an integral over the non-compact part of the SLOCC operation \eqref{Aop}  with respect to a normalised measure $d\mu(A)$ onto the $k$-th irreducible subspace. The average success probability of the SLOCC operations drawn according to assumed measure is specified by:
$$\langle p_M(\rho)\rangle=\sum_k\left(\frac{\beta_k}{D^k}\right)\Tr\left(\rho\hat\Pi_k^\dagger\right),$$
whereas $m$-times application of the map gives:
$$  \mathcal T_{\operatorname{SL}}^{m}(\rho)=\sum_k\left(\frac{\beta_k}{D^k}\right)^m\mathcal T_{\operatorname{U}}^{(k)}(\rho).$$
\end{teor}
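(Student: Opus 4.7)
The plan is to assemble the theorem by collecting the ingredients that have already been built up in the preceding derivation, so the proof is essentially a bookkeeping exercise: the heavy lifting is done by Lemma \ref{lem1} and Lemma \ref{lem2}. First I would absorb the overall normalisation using $\|KAK'\| = \|A\|$, which holds because the Cartan decomposition of $\mathrm{SL}(d,\mathbb C)$ coincides with the singular value decomposition, so that Definition \ref{def1} takes the form \eqref{mainTwirlSL2} with $A_{\textrm{n}} = A/\|A\|$. Next, I would distribute the tensor power across the product $KA_{\textrm n}K'$ and pull the $K'$-integration past $K^{\otimes t}A_{\textrm{n}}^{\otimes t}$, giving \eqref{mainTwirlSL30}; the inner $\mathrm{SU}(d)$-twirl is then identified with the full unitary twirl $\mathcal T_{\operatorname{U}}(\rho)$ via the $\mathrm{SU}(d)$-vs-$\mathrm{U}(d)$ equivalence established in \cite{Markiewicz21}.

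Second, I would invoke Lemma \ref{lem1} to commute $\mathcal T_{\operatorname{U}}(\rho)$ past each $\alpha^{\otimes t}$ operator appearing in the remaining integrand (both $K^{\otimes t}$ and $A_{\textrm{n}}^{\otimes t}$), so that $\mathcal T_{\operatorname{U}}(\rho)$ factors out of the whole integral, producing the factorised identity $\mathcal T_{\textrm{SL}}(\rho) = \mathcal T_{\operatorname{U}}(\rho)\,\mathcal T_{\operatorname{U}}(\mathcal A)$ with $\mathcal A$ as in \eqref{Aop}. Third, I would apply Lemma \ref{lem2}, combined with linearity of $\mathcal T_{\operatorname{U}}$ and Fubini's theorem (justified because the integrands are continuous and bounded on $\|A_{\textrm n}\|=1$ with $d\mu(A)$ normalised), to write $\mathcal T_{\operatorname{U}}(\mathcal A) = \sum_l (\beta_l/D^l)\hat\Pi_l$ in fully diagonal form. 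Multiplying this block-diagonal expression against the closed form \eqref{fullUnitTwirlMap0} of $\mathcal T_{\operatorname{U}}(\rho)$ and collapsing via the projector-algebra identity $\hat\Pi_k^{\lambda_1\lambda_2}\hat\Pi_l = \delta_{kl}\hat\Pi_k^{\lambda_1\lambda_2}$ from \eqref{PiIdent} then yields the announced form \eqref{finmapslteor}.

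For the mean success probability, I would take the trace of \eqref{finmapslteor} and use $\operatorname{Tr}(\hat\Pi_k^{\lambda_1\lambda_2}) = D^k_L\,\delta_{\lambda_1\lambda_2}$ together with $\sum_\lambda \hat\Pi_k^{\lambda\lambda} = \hat\Pi_k$ to collapse each $\operatorname{Tr}(\mathcal T_{\operatorname{U}}^{(k)}(\rho))$ to $\operatorname{Tr}(\rho\,\hat\Pi_k^\dagger)$, giving the stated expression for $\langle p_M(\rho)\rangle$. For the iteration formula I would use induction on $m$: the base case $m=1$ is the main identity, and the inductive step follows from the key observation that $\mathcal T_{\operatorname{U}}^{(k)}\bigl(\mathcal T_{\operatorname{U}}^{(l)}(\rho)\bigr) = \delta_{kl}\,\mathcal T_{\operatorname{U}}^{(k)}(\rho)$, which is precisely the block-diagonal collapse already displayed in line four of \eqref{sloccnoidem}; each additional composition with $\mathcal T_{\textrm{SL}}$ therefore contributes exactly one more factor $\beta_k/D^k$ in the $k$-th sector.

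The main obstacle is the factorisation step, whose legitimacy depends entirely on Lemma \ref{lem1}; this in turn rests on the commutativity \eqref{PiComm0} of the two Schur operator bases and on the block-diagonal representation \eqref{alphaBlockDiag} being available for an arbitrary invertible $\alpha$ (not only unitary), which is ensured by the fact that \eqref{alphaBlockDiag} is really a statement about the $\mathrm{GL}(d,\mathbb C)$ Schur–Weyl duality inherited from $\mathrm{U}(d)$ by analytic continuation. Once that is granted, all remaining manipulations reduce to the projector algebra developed in Appendix \ref{App:SWDuality} and elementary properties of integration.
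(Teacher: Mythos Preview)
Your proposal is correct and follows essentially the same route as the paper's derivation in Section~\ref{Sec:ExactSLOCC}: normalisation via $\|KAK'\|=\|A\|$, identification of the inner $\mathrm{SU}(d)$-twirl with $\mathcal T_{\operatorname{U}}$, factorisation through Lemma~\ref{lem1}, diagonalisation of $\mathcal T_{\operatorname{U}}(\mathcal A)$ via Lemma~\ref{lem2} and Fubini, collapse through the projector identity \eqref{PiIdent}, and then the trace and induction arguments for $\langle p_M(\rho)\rangle$ and $\mathcal T_{\operatorname{SL}}^{m}$. Even your identification of the key obstacle---that Lemma~\ref{lem1} relies on the $\mathrm{GL}(d,\mathbb C)$ version of the block decomposition \eqref{GLDecomp}---matches the paper's emphasis.
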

The issue of \textit{noiseless subsystems} in the case of SLOCC collective twirling map \eqref{mainTwirlSL1} is subtle. Formally there are no states which are invariant with respect to this map in contrast to the unitary case, on the other hand the map \eqref{SloccTwirlSimp} does not change phase relations between the states within each of the irreducible subspaces. Therefore states of the form:
\begin{equation}
    \label{sloccdfsstate0}
    \rho_{\textrm{SL}}^{(k)}=\frac{1}{D^k_L}\sum_{\lambda_1\lambda_2=1}^{D^k_V}\rho^{k}_{\lambda_1\lambda_2}\hat\Pi_{k}^{\lambda_1\lambda_2},
\end{equation}
living within $k$-th subspace are untouched by the map \textit{upon postselection} to the $k$-th irreducible subspace.
Indeed, due to the formula \eqref{SloccTwirlSimp}, and the fact that the state \eqref{sloccdfsstate0} is invariant with respect to the unitary twirling we have:
\begin{equation}
    \label{sldfseffect}
    \mathcal T_{\operatorname{SL}}\left(\rho_{\textrm{SL}}^{(k)}\right)=\left(\frac{\beta_k}{D^k}\right)\rho_{\textrm{SL}}^{(k)}.
\end{equation}
What is modified in this case with respect to the unitary twirling is the projection probability $p_k$ to the subspace, which now reads:
\begin{equation}
    \label{pk}
    p_k=\left(\frac{\beta_k}{D^k}\right),
\end{equation}
and can be physically interpreted as the probability of success of postselection to the respective subspace. Note that due to the normalisation of the matrices $A_{\textrm n}$ we have the following relation:
\begin{equation}
\beta_k=\Tr\left(\left[\int\left(A_{\operatorname{n}}^{\otimes t}\right)^2d\mu(A)\right]\hat\Pi_k^\dagger\right)\leq \Tr\left(\hat \Pi_k\right)=D^k,
\end{equation}
hence $\beta_k\leq D^k$, which guarantees that the probability $p_k$ is well-defined. Further note that the choice of the integration measure $d\mu(A)$ determines the ordering of the coefficients $\{\beta_k\}$. The highest coefficient  $\beta_k$, which corresponds to  invariant subspaces $V^k_m$ with the dimension $V^k>1$ determines the noiseless subsystems with the highest projection probabilities $p_k$. These subsystems have the slowest rate of vanishing of the projection probability $p_k$ for the number $m$ of applications of the map $\mathcal T^m_{\textrm{SL}}$ \eqref{TSLm} going to infinity, $m\rightarrow\infty$.

In order to make all the above considerations less abstract let us discuss an example. Let us take the four-qubit system, in which according to our notation local Hilbert space dimension is $d=2$ and the tensor power is $t=4$. In this case the tensor space $(\mathbb C^2)^{\otimes 4}$ decomposes under collective action of the unitary group $\textrm{U}(2)^{\otimes 4}$ into three irreducible subspaces of dimensions respectively: $D^1=5$, $D^2=9$ and $D^3=2$. The first subspace of dimension $D^1=5$ is the subspace of fully symmetric states, and corresponds to a single (of multiplicity $D^1_V=1$) $5$-dimensional spin-$2$ irreducible representation of $\textrm{U}(2)$. The second subspace of dimension $D^2=9$ corresponds to three ($D^2_V=3$) equivalent $3$-dimensional spin-$1$ irreducible representations of $\textrm{U}(2)$, whereas finally the third subspace of dimension $D^3=2$ corresponds to two ($D^3_V=2)$ $1$-dimensional irreducible representations. The non-compact factor of the $\textrm{SL}(2,\mathbb C)$ group, which physically corresponds to a filtering operation, consists of matrices \eqref{ASL2C}, with norm equal to $\| A\|=x$, therefore a normalised filtering matrix in the formula \eqref{Aop} has the form:
\begin{equation}
\label{AnSL2C}
     A_{\operatorname{n}} =  \left( \begin{array}{cc} 1 & 0 \\ 0 & x^{-2}  \end{array}  \right),\,\, x \geq 1 .
\end{equation}
If we assume the following normalised integration measure on the non-compact component:
\begin{equation}
    \label{supmeasure}
    d\mu(A)=d\mu(x)=\frac{e^{-x}dx}{\int_1^{\infty}e^{-x}dx},
\end{equation}
the coefficients $\beta_k$ in \eqref{betafin} read respectively:
\begin{eqnarray}
\beta_1&\approx&0.30036,\nonumber\\
\beta_2&\approx&0.14652,\nonumber\\
\beta_3&\approx&0.12290.
\end{eqnarray}

\section{Duality of averaging for generalised Schur-Weyl-dual pairs}
\label{Sec:Conj}
Let us start by comparing closed formulas \eqref{finmapslteor} and \eqref{fullSymTwirlMap0} for twirling of quantum states over SLOCC operations and over permutations of subsystems written in a specific way:
\begin{eqnarray}
\label{dualpairSLSt}
\mathcal T_{\textrm{SL}}(\rho)&=&\sum_{k}\gamma^{\textrm{SL}}_k\sum_{\lambda_1\lambda_2=1}^{D^k_V}\operatorname{Tr}\left(\rho\hat\Pi_{k}^{\lambda_1\lambda_2\dagger}\right)\hat\Pi_{k}^{\lambda_1\lambda_2},\nonumber\\
 \mathcal T_{\textrm{sym}}(\rho)&=&\sum_{k}\gamma^{\textrm{sym}}_k\sum_{m_1,m_2=1}^{D^k_L}\operatorname{Tr}\left(\rho\hat\Pi^{m_1 m_2\dagger}_k\right)\hat\Pi^{m_1 m_2}_k,\nonumber\\
\end{eqnarray}
in which the coefficients read:
\begin{equation}
    \gamma^{\textrm{SL}}_k=\frac{1}{D^k_L}\left(\frac{\beta_k}{D^k}\right),\,\,\,\gamma^{\textrm{sym}}_k=\frac{1}{D^k_V}.
\end{equation}
These two operations correspond to averaging of quantum states with respect to elements of a dual reductive pair: $\{\textrm{SL}(d,\mathbb C)^{\otimes t}, \textrm{S}_t\}$ consisting of the special linear group acting collectively on $t$ subsystems and the symmetric group of $t$-element set. Comparing the two formulas it can be seen that up to the coefficients these two operations project quantum states (understood as density operators) into operator subspaces  the action of which leaves invariant subspaces irreducible with respect to the second counterpart of the pair. In this section we show, that this result is not specific to the already discussed pairs $\{\textrm{SL}(d,\mathbb C)^{\otimes t}, \textrm{S}_t\}$ and $\{\textrm{U}(d)^{\otimes t}, \textrm{S}_t\}$, but holds for arbitrary dual reductive pair. Our proof will be based on showing, that crucial derivations already presented do not depend on the specific choice of the group under consideration, but hold due to Schur-Weyl duality for the corresponding groups.

Let us start by introducing the idea of a \textit{dual reductive pair} in a more formal way. The crucial concept is that of a \textit{reductive group}, which is any algebraic subgroup of a general linear group, the \textit{rational} \footnote{By rational representation one means a matrix representation of a matrix Lie group in which matrix elements of the representation are rational functions of matrix elements of the represented group (see \cite{Goodman09}, section 1.5).} representations of which are \textit{completely reducible}, which means that they are direct sums of irreducible representations \cite{Goodman09}. 
\begin{deff}
\label{dualpair}
Let $\operatorname{G}_1$ and $\operatorname{G}_2$ be two reductive subgroups of the general linear group $\operatorname{GL}(n,\mathbb C)$. A pair of groups $\{\operatorname{G}_1,\operatorname{G}_2\}$ is called a dual reductive pair if and only if matrix algebras generated by elements of these groups are mutual commutants in the full matrix algebra $\operatorname{End}(\mathbb C^n).$
\end{deff}
For any dual reductive pair a Schur-Weyl duality holds (see \cite{Goodman09}, Prop. 9.2.1), which implies that there exists a vector basis on $\mathbb C^n$, called by us \textit{Schur vector basis}, which via outer product of basis vectors defines two operator bases, called by us \textit{Schur operator bases}. These two operator bases block-diagonalise actions of operators from both groups, in full analogy to the Schur operator bases \eqref{PiBasis0} introduced already in Section \ref{Sec:SWIntro} for the dual pair $\{\textrm{U}(d)^{\otimes t}, \textrm S_t\}$. For simplicity let us keep the entire notation from that section regarding Schur vector basis and Schur operator bases, with the convention, that the subspaces $L^i_{\lambda}$ in \eqref{SchurBasis0} are irreducible under the action of $\operatorname{G}_1$, whereas the subspaces $V^i_m$ are irreducible under the action of $\operatorname{G}_2$. All properties of the corresponding Schur  operator bases \eqref{PiBasis0}, which follow from orthogonality of Schur vector basis naturally apply in the current generalised context. Then arbitrary matrix representants $G_1,G_2$ of the groups $\operatorname{G}_1,\operatorname{G}_2$ have the following Schur operator bases decompositions:
\begin{eqnarray}
\label{dualpairG1G2}
G_1&=&\sum_{k}\frac{1}{d^k_L}\sum_{\lambda_1\lambda_2=1}^{d^k_V}\operatorname{Tr}\left(G_1\hat\Pi_{k}^{\lambda_1\lambda_2\dagger}\right)\hat\Pi_{k}^{\lambda_1\lambda_2},\nonumber\\
 G_2&=&\sum_{k}\frac{1}{d^k_V}\sum_{m_1,m_2=1}^{d^k_L}\operatorname{Tr}\left(G_2\hat\Pi^{m_1 m_2\dagger}_k\right)\hat\Pi^{m_1 m_2}_k.\nonumber\\
\end{eqnarray}
Note that above we use a modified notation for the dimensions of irreducible subspaces (small letters $d^k$) in order not to confuse them with dimensions for irreducible subspaces for collective unitary actions and permutations specified by formulas \eqref{WCHF} and \eqref{HLF}. Note that the matrix algebras spanned  by the operators of the type of $G_1$ and $G_2$ are closed under multiplication and Hermitian conjugate, therefore they form \textit{finite dimensional von Neumann algebras} (in the context of Schur-Weyl duality for compact groups this fact has been stated in \cite{Marvian14}). We will utilise this important fact in the proof of the main result of this section.

The crucial issue in our generalisation is that any reductive Lie group $\textrm G$ has a Cartan `KAK' decomposition (see \cite{Knapp}, section VII.3), which essentially means that any element $G\in \textrm G$ can be represented as $G=KAK'$, in which $K,K'$ are elements of a maximal compact subgroup $\textrm K$ of $\textrm G$, whereas $A$ is an element of a maximal abelian subgroup $\textrm A$ of $\textrm G$. We can therefore define a \textit{general stochastic operation} on a state space $\mathbb C^n$ of $n$-level quantum systems via:
\begin{equation}
    \label{gsodef}
    \mathcal G(\rho)=\left(KA_{\textrm{n}}K'\right)\rho \left(KA_{\textrm{n}}K'\right)^{\dagger},
\end{equation}
in which $K,K'$ are matrices from a unitary representation of the maximal compact subgroup $\textrm K$ on $\mathbb C^n$ and $A_{\textrm n}=A/\|A\|$ is a normalised matrix belonging to a representation of the maximal abelian subgroup $\textrm A$ on $\mathbb C^n$. Some comments are needed here. Firstly, in full analogy with the SLOCC map \eqref{SLOCCMap}, the map defined by \eqref{gsodef} is trace non-increasing (for a formal proof see remark in the Appendix \ref{App:LemTrace}) and preserves positivity, therefore can be interpreted as a valid quantum operation. Secondly, in the above definition we do not assume any internal structure of the state space $\mathbb C^n$, so the issue whether it is a single-system space or it represents several subsystems is left free for concrete implementations. Therefore the map \eqref{gsodef} should be thought of as an abstract interface for defining concrete quantum maps with concrete physical interpretation. In the same manner we define an abstract generalisation of the twirling map to the case of $\textrm G$-twirling with respect to any reductive group $\textrm G$ by the iterated integral with respect to the Cartan decomposition, in full  analogy to \eqref{mainTwirlSL2}:
\begin{equation}
    \mathcal T_{\textrm{G}}(\rho)=\int \left(KA_{\textrm{n}}K'\right)\rho \left(KA_{\textrm{n}}K'\right)^{\dagger} d\mu(K)d\mu(A)d\mu(K'),
    \label{mainTwirlG0}
\end{equation}
in which we assume, that $d\mu(K)$ is a normalised left- and right-invariant Haar measure on the compact group $\textrm K$ (which in the case of $\textrm K$ being a discrete group is simply a \textit{counting measure}), whereas $d\mu(A)$ is any normalised measure on the group manifold of $\textrm A$. The map \eqref{mainTwirlG0} should be treated as an abstract interface for defining any meaningful twirling maps on quantum states rather than as a concrete physical operation. Note that all previously discussed twirling maps can be represented in that way.

Now we are ready to state the main result of this section:
\begin{teor}
\label{teorG1G1}
Let us take any dual reductive pair $\{\operatorname{G}_1,\operatorname{G}_2\}$ acting naturally on the complex vector space $\mathbb C^n$. Let us assume the notation for Schur basis decomposition of arbitrary elements $G_1\in \operatorname{G}_1$ and $G_2\in\operatorname{G}_2$ as presented in formulas \eqref{dualpairG1G2}. Let us also assume Cartan decompositions of these elements as $G_1=K^{(1)}A^{(1)}K^{(1)'}$ and $G_2=K^{(2)}A^{(2)}K^{(2)'}$. 
Then the generalised twirling maps with respect to these groups defined via iterated integral over Cartan decomposition of these groups \eqref{mainTwirlG0} are in the following dual relation:
\begin{eqnarray}
\label{GtwirlMaps}
\mathcal T_{\operatorname{G}_1}(\rho)&=&\sum_{k}\frac{1}{d^k_V}\left(\frac{\beta^{\operatorname{G}_1}_k}{d^k}\right)\sum_{m_1m_2=1}^{d^k_L}\operatorname{Tr}\left(\rho\hat\Pi_{k}^{m_1m_2\dagger}\right)\hat\Pi_{k}^{m_1m_2},\nonumber\\
\mathcal T_{\operatorname{G}_2}(\rho)&=&\sum_{k}\frac{1}{d^k_L}\left(\frac{\beta^{\operatorname{G}_2}_k}{d^k}\right)\sum_{\lambda_1\lambda_2=1}^{d^k_V}\operatorname{Tr}\left(\rho\hat\Pi_{k}^{\lambda_1\lambda_2\dagger}\right)\hat\Pi_{k}^{\lambda_1\lambda_2}.\nonumber\\
\end{eqnarray}
The coefficients $\beta_k^{\operatorname{G_i}}$ are in general expressed as integrals over the noncompact factors of the Cartan decomposition of both groups:
\begin{eqnarray}
\label{betafinG}
\beta_k^{\operatorname{G}_i}=\Tr\left(\left[\int A_{\operatorname{n}}^{(i)}A_{\operatorname{n}}^{(i)\dagger}d\mu_i\left(A^{(i)}\right)\right]\hat\Pi_k^\dagger\right),
\end{eqnarray}
and reduce to $\beta_k^{\operatorname{G}_i}=d^k$ for compact group $\operatorname{G}_i$.
\end{teor}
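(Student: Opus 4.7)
The plan is to mirror the factorisation argument that led to Theorem \ref{maintheor} and verify that its two non-routine ingredients generalise to an arbitrary dual reductive pair, the only new algebraic input being a finite-dimensional form of Weyl's unitarian trick. Starting from the iterated integral \eqref{mainTwirlG0} applied to $\operatorname{G}_i$ with $i\in\{1,2\}$, I would first pull the $d\mu(K^{(i)\prime})$ integration through the product, producing an inner compact twirl $\mathcal T_{\operatorname{K}_i}(\rho)=\int K\rho K^{\dagger}\,d\mu(K)$ over the maximal compact subgroup $\operatorname{K}_i\subset\operatorname{G}_i$, so that
\begin{equation*}
\mathcal T_{\operatorname{G}_i}(\rho)=\int K^{(i)} A_{\operatorname{n}}^{(i)}\,\mathcal T_{\operatorname{K}_i}(\rho)\,A_{\operatorname{n}}^{(i)\dagger}K^{(i)\dagger}\, d\mu(K^{(i)})\,d\mu_i(A^{(i)}).
\end{equation*}
This repeats the passage from \eqref{mainTwirlSL30} to \eqref{mainTwirlSL3}; the only formal difference is that the action on $\mathbb C^n$ no longer carries an explicit tensor-power structure, but the argument depends only on the factorised shape $KA_{\operatorname n}K'$ and on left-invariance of the Haar measure on $\operatorname{K}_i$.

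The main obstacle is the generalisation of Lemma \ref{lem1}: showing that $\mathcal T_{\operatorname{K}_i}(\rho)$ commutes with every element of the matrix algebra $\operatorname{Alg}(\operatorname{G}_i)\subset\operatorname{End}(\mathbb C^n)$, and in particular with the remaining integrand factors $K^{(i)}$ and $A_{\operatorname n}^{(i)}$. The key algebraic fact I would invoke is the identity $\operatorname{Alg}(\operatorname{K}_i)=\operatorname{Alg}(\operatorname{G}_i)$ in $\operatorname{End}(\mathbb C^n)$, a finite-dimensional instance of Weyl's unitarian trick: because $\operatorname{G}_i$ is reductive and its representation on $\mathbb C^n$ is completely reducible, the restriction to $\operatorname{K}_i$ already exposes the full invariant-subspace structure, so both groups generate the same finite-dimensional von Neumann algebra. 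Passing to commutants and combining with Definition \ref{dualpair} yields $\operatorname{Alg}(\operatorname{K}_i)'=\operatorname{Alg}(\operatorname{G}_i)'=\operatorname{Alg}(\operatorname{G}_{3-i})$. Since $\mathcal T_{\operatorname{K}_i}(\rho)$ is $\operatorname{K}_i$-invariant by construction, it lies in $\operatorname{Alg}(\operatorname{K}_i)'$ and hence commutes with all of $\operatorname{Alg}(\operatorname{G}_i)$; this is the operator-level content of the commutativity \eqref{PiComm0}.

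With $\mathcal T_{\operatorname{K}_i}(\rho)$ pulled out of the integral, what remains is the compact twirl of a positive element of $\operatorname{Alg}(\operatorname{G}_i)=\operatorname{Alg}(\operatorname{K}_i)$,
\begin{equation*}
\mathcal T_{\operatorname{G}_i}(\rho)=\mathcal T_{\operatorname{K}_i}(\rho)\cdot\mathcal T_{\operatorname{K}_i}\!\Bigl(\int A_{\operatorname n}^{(i)}A_{\operatorname n}^{(i)\dagger}\,d\mu_i(A^{(i)})\Bigr).
\end{equation*}
The generalisation of Lemma \ref{lem2}, whose proof in Appendix \ref{proof:lem2} uses only $\operatorname{K}$-invariance and the Schur-Weyl block decomposition of $\operatorname{Alg}(\operatorname{K}_i)$, forces the second factor to be a sum $\sum_k(\beta_k^{\operatorname{G}_i}/d^k)\hat\Pi_k$ over the central projectors, with $\beta_k^{\operatorname{G}_i}$ as in \eqref{betafinG}. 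The first factor is given by the Schur-basis expansion in the operators spanning the dual algebra $\operatorname{Alg}(\operatorname{G}_{3-i})$, obtained by repeating the Schur-lemma computation that produces \eqref{fullUnitTwirlMap0}. Multiplying the two factors and applying the orthogonality identity $\hat\Pi_k^{\alpha\beta}\hat\Pi_l=\delta_{kl}\hat\Pi_k^{\alpha\beta}$ from Appendix \ref{App:SWDuality} collapses the double sum to the diagonal, yielding both lines of \eqref{GtwirlMaps}. The compact specialisation is immediate: if $\operatorname{G}_i$ is already compact the $\operatorname{A}^{(i)}$-factor reduces to $\id$, so $\beta_k^{\operatorname{G}_i}=\operatorname{Tr}(\hat\Pi_k)=d^k$ and the weights become unity.
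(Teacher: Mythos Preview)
Your proposal is correct and follows essentially the same route as the paper: factor the iterated Cartan integral to expose an inner compact twirl, commute $\mathcal T_{\operatorname{K}_i}(\rho)$ past $K^{(i)}$ and $A_{\operatorname n}^{(i)}$, apply the generalised Lemma~\ref{lem2} to the remaining $\operatorname K_i$-twirl of the abelian factor, and collapse via \eqref{PiIdent}. The one place where you are in fact more explicit than the paper is in naming Weyl's unitarian trick to justify $\operatorname{Alg}(\operatorname K_i)=\operatorname{Alg}(\operatorname G_i)$; the paper uses this implicitly when it asserts that $\mathcal T_{\operatorname{K}_1}(\rho)$ has the same Schur-basis form as the full compact twirl, so your argument tightens rather than departs from theirs.
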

\begin{proof}
The proof is actually a generalisation of the proof for the SLOCC twirling map presented in the previous section. For convenience we will proceed with the proof for the twirling map $\mathcal T_{\operatorname{G}_1}$, the proof for the map $\mathcal T_{\operatorname{G}_2}$ is entirely analogous so will be skipped. Firstly note that the twirling map $\mathcal T_{\operatorname{K}_1}$ with respect to the maximally compact subgroup $\operatorname{K_1}$ of $\operatorname{G}_1$ has entirely the same form as the unitary twirling map with respect to the entire unitary group \eqref{fullUnitTwirlMap0}, namely it reads:
\begin{equation}
\mathcal T_{\operatorname{K}_1}(\rho)=\sum_{k}\frac{1}{d^k_V}\sum_{m_1m_2=1}^{d^k_L}\operatorname{Tr}\left(\rho\hat\Pi_{k}^{m_1m_2\dagger}\right)\hat\Pi_{k}^{m_1m_2}.\nonumber\\
\end{equation}
This is because the Haar measure on any compact group is left and right invariant, therefore one can perform the entire derivation for the unitary twirling map without any modification (for alternative proof of general twirling map with respect to any compact symmetry group see \cite{Bartlett07}, section II.C). Secondly note that both elements $K^{(1)}$ and $A^{(1)}$ of the Cartan decomposition of any matrix $G_1$ have  decompositions analogous to $G_1$ in terms of  the corresponding Schur operator basis \eqref{dualpairG1G2}, namely:
\begin{eqnarray}
\label{decompK1A1}
K^{(1)}&=&\sum_{k}\frac{1}{d^k_L}\sum_{\lambda_1\lambda_2=1}^{d^k_V}\operatorname{Tr}\left(K^{(1)}\hat\Pi_{k}^{\lambda_1\lambda_2\dagger}\right)\hat\Pi_{k}^{\lambda_1\lambda_2},\nonumber\\
 A^{(1)}&=&\sum_{k}\frac{1}{d^k_L}\sum_{\lambda_1\lambda_2=1}^{d^k_V}\operatorname{Tr}\left(A^{(1)}\hat\Pi_{k}^{\lambda_1\lambda_2\dagger}\right)\hat\Pi_{k}^{\lambda_1\lambda_2}.\nonumber\\
\end{eqnarray}
Since operators $\hat\Pi_{k}^{\lambda_1\lambda_2}$ and $\hat\Pi_{k}^{m_1m_2}$ commute for any Schur operator bases (see \eqref{PiCommute}), the generalisation of Lemma  \ref{lem1} holds and the operator $\mathcal T_{\operatorname{K}_1}(\rho)$ commutes with both $K^{(1)}$ and $A^{(1)}$. Therefore the generalised twirling map \eqref{mainTwirlG0} factorises in analogy to the SLOCC twirling map:
\begin{eqnarray}
    &&\mathcal T_{\textrm{G}_1}(\rho)=\int \left(K^{(1)}A^{(1)}_{\textrm{n}}K'^{(1)}\right)\rho \left(K^{(1)}A^{(1)}_{\textrm{n}}K'^{(1)}\right)^{\dagger}\nonumber\\ &&d\mu_1\left(K^{(1)}\right)d\mu_1\left(A^{(1)}\right)d\mu_1\left(K'^{(1)}\right)\nonumber\\
    &&=\int K^{(1)}A^{(1)}_{\textrm{n}}\left(\int K'^{(1)}\rho K'^{(1)\dagger}d\mu_1\left(K'^{(1)}\right)\right)A^{(1)\dagger}_{\textrm{n}}K^{(1)\dagger}\nonumber\\ &&d\mu_1\left(K^{(1)}\right)d\mu_1\left(A^{(1)}\right)\nonumber\\
     &&=\int K^{(1)}A^{(1)}_{\textrm{n}}\mathcal T_{\operatorname{K}_1}(\rho)A^{(1)\dagger}_{\textrm{n}}K^{(1)\dagger}d\mu_1\left(K^{(1)}\right)d\mu_1\left(A^{(1)}\right)\nonumber\\
       &&=T_{\operatorname{K}_1}(\rho)\int K^{(1)}A^{(1)}_{\textrm{n}} A^{(1)\dagger}_{\textrm{n}}K^{(1)\dagger}d\mu_1\left(K^{(1)}\right)d\mu_1\left(A^{(1)}\right)\nonumber\\
         &&=T_{\operatorname{K}_1}(\rho)T_{\operatorname{K}_1}\left(\int A^{(1)}_{\textrm{n}} A^{(1)\dagger}_{\textrm{n}}d\mu_1\left(A^{(1)}\right)\right).
    \label{mainTwirlG}
\end{eqnarray}
We have obtained a factorisation analogous to the case of twirling with respect to the SLOCC operations. Now the crucial point is to show that an analogy of Lemma \ref{lem2} holds also in the general case:
\begin{lem}
\label{lem2gen}
Let $G_2$ be arbitrary element of matrix von Neumann algebra spanned by representation matrices of the group $\operatorname{G}_2$. Then the operator $\mathcal T_{\operatorname{K}_1}(G_2)$ is fully diagonal in the \textit{Schur operator basis} related with dual reductive pair $\{\operatorname{G}_1, \operatorname{G}_2\}$:
\begin{equation}
\label{twirledProdOpgen}
\mathcal T_{\operatorname{K}_1}(G_2)=\sum_k \frac{1}{d^k}\alpha_k\hat\Pi_k,  
\end{equation}
in which the coefficients read: $\alpha_k=\Tr(G_2\hat\Pi_k^\dagger)$, $\hat\Pi_k$ is a projector onto the $i$-th irreducible subspace \eqref{iProj} and $d^k$ is the dimension of the subspace: $d^k=d^k_Ld^k_V$.
\end{lem}
The proof of the above lemma is entirely analogous to the one presented in Appendix \ref{proof:lem2}, and holds due to the fact that the mentioned proof is based solely on orthogonality properties of the Schur basis, which do not depend on the structure of groups under consideration. 

Now, since as mentioned before, the algebra generated by operators $G_1$ is a von Neumann algebra, therefore the operator $\int A^{(1)}_{\textrm{n}} A^{(1)\dagger}_{\textrm{n}}d\mu\left(A^{(1)}\right)$ also belongs to this algebra and consequently Lemma \ref{lem2gen} implies, that:
\begin{equation}
\label{twirledProdOpgen1}
\mathcal T_{\operatorname{K}_1}\left(\int A^{(1)}_{\textrm{n}} A^{(1)\dagger}_{\textrm{n}}d\mu_1\left(A^{(1)}\right)\right)=\sum_k \frac{\beta_k^{\operatorname{G}_1}}{d^k}\hat\Pi_k,  
\end{equation}
in which the coefficient $\beta_k^{\operatorname{G}_1}$ is specified by formula \eqref{betafinG} for $i=1$. Derivation of the final form of the generalised twirling map $\mathcal T_{\operatorname{G}_1}$ is now entirely analogous to the derivation performed in a sequence of formulas \eqref{mainTwirlSL6}, which completes the proof.
\end{proof}
The entire issue of \textit{noiseless subsystems} can be also restated in the general case of twirling with respect to dual reductive pairs. Namely states of the form:
\begin{eqnarray}
    \label{sloccdfsstate}
    \rho_{\operatorname{G_1}}^{(k)}&=&\frac{1}{d^k_V}\sum_{m_1m_2=1}^{d^k_L}\rho^{k}_{m_1m_2}\hat\Pi_{k}^{m_1m_2}\nonumber\\
      \rho_{\operatorname{G_2}}^{(k)}&=&\frac{1}{d^k_L}\sum_{\lambda_1\lambda_2=1}^{d^k_V}\rho^{k}_{\lambda_1\lambda_2}\hat\Pi_{k}^{\lambda_1\lambda_2}
\end{eqnarray}
are invariant with respect to twirling operations \eqref{GtwirlMaps} on condition that a postselection to the $k$-th subspace is performed, which succeeds with probability respectively $\frac{\beta^{\operatorname{G}_1}_k}{d^k}$ and $\frac{\beta^{\operatorname{G}_2}_k}{d^k}$. If any of the groups has finite dimensional unitary representation (which holds if the group is a compact Lie group or a finite group), then $\beta_k=d^k$, and therefore the respective noiseless subsystem is unconditional.

\section{Conclusions}
\label{Sec:Conc}
In this work we introduced a very general averaging operation on the state space of any finite dimensional quantum system with respect to an arbitrary symmetry group possessing the Cartan `KAK' decomposition. The averaging procedure is defined via iterated integral over the compact and non-compact components of the Cartan decomposition of the symmetry group. This very general notion of averaging of quantum states allows us to significantly generalise known results concerning averaging of quantum states over collective unitary operations. 

In a typical scenario of averaging multipartite quantum states over collective unitary operations the effect of such averaging is determined by a mathematical concept called Schur-Weyl duality, which introduces the notion of a dual reductive pair, a pair of two matrix groups that as matrix algebras are mutual commutants. It turns out that the dual counterpart to the collective action of the unitary group is the symmetric group. Then Schur's Lemmas imply, that averaging over one of the elements of the dual pair just projects the quantum state onto a subspace spanned by elements of the second counterpart.
We called this relation \textit{duality of averaging}. Our main result in this work is showing, that such defined duality of averaging persists in the case of arbitrary two symmetry groups, which are dual reductive pairs. We have shown this fact in three steps:
\begin{itemize}
    \item firstly we analysed the case of averaging multipartite quantum states over Cartan-decomposed collective SLOCC operations represented by the tensor power of the non-compact symmetry group $\operatorname{SL}(d,\mathbb C)$, which also forms a dual reductive pair with the symmetric group; as a result we obtained a factorisation of such an averaging map into two components, the first being a collective unitary averaging map applied to the input state, whereas the second being unitary averaging map applied to the integral over  non-compact components of the SLOCC operations;
    \item secondly we noticed that such a factorisation implies that the result of averaging quantum states over SLOCC operations has the same block-diagonal structure as the unitary averaging, but endowed with additional correcting coefficients;
    this means that averaging of quantum states with respect to SLOCC operations and with respect to permutation of subsystems also conforms to the duality of averaging;
    \item finally we have shown that the proof for the SLOCC case directly generalises to the averaging procedure over any two groups forming a dual reductive pair, therefore the duality of averaging holds for averaging over any dual pair of symmetry groups.
\end{itemize}
The above points summarise the structural aspect of our work, which shows in a very general context the form of averaging maps for arbitrary finite dimensional quantum systems.

The main physical consequence of our result appears in the context of the so called \textit{noiseless subsystems}. It is a well-known consequence of Schur-Weyl duality, that quantum states spanned by the operator basis of permutation operators are untouched by the collective unitary  averaging. If we interpret collective unitary operations as a collective quantum noise, than such states are immune to this type of noise.
Noiseless subsystems can be also used to encode quantum information in the case of lack of a common reference frame between the sender and receiver of a given state. Our result shows, that the idea of noiseless subsystems persists to the case of averaging over arbitrary symmetry groups forming dual reductive pairs \textit{in a conditional way}: if one of the groups in the pair is non-compact, then noiseless subsystems persist on condition that we make a postselection to the subsystem: all the input from the non-unitary part of the averaging process is the effect of filtering, which means that in some runs of the experiment a system will not be found in a given subspace. \textit{However all the block-diagonal structure of noiseless subsystems is solely determined by the unitary part of the averaging process.}

All the discussion within this work concerns the case of finite dimensional quantum systems. In our opinion an interesting open problem would be to search for analogues of duality of averaging in the context of averaging infinite dimensional quantum systems over unitary representations of symmetry groups. A natural example would be averaging quantum states of massive and massless particles over unitary representations of Poincare group \cite{Dragan15} or Galilean group \cite{Piani16}.

\section{Acknowledgements}

We acknowledge partial support by the Foundation for Polish Science (IRAP project, ICTQT, contract no. MAB/2018/5, co-financed by EU within Smart Growth Operational Programme).

\bibliography{TDesigns}
\newpage
\appendix 

\begin{widetext}
\section{Direct analytical form of the Unitary and Symmetric Twirling maps}
\label{App:UnitDFS}

\subsection{Construction of Schur vector basis and Schur operator basis for  the dual pair $\{\textrm U(d)^{\otimes t},\textrm{S}_t\}$.}
\label{App:SWDuality}

In this section we provide a detailed  construction of a Schur vector basis and Schur operator basis for the joint action of the product group $\textrm U(d)^{\otimes t}\times\textrm{S}_t$ on the tensor space $(\mathbb C^d)^{\otimes t}$. The presentation below is inspired by the book of W.-K. Tung \cite{Tung}, an alternative construction using cascaded Clebsch-Gordan decompositions can be found in the PhD thesis of A. Harrow \cite{HarrowPHD}.

Distinct irreducible representations of the action of $\textrm U(d)^{\otimes t}\times\textrm{S}_t$ on $(\mathbb C^d)^{\otimes t}$ are labelled by distinct Young diagrams with $t$ boxes and \textit{at most} $d$ rows. These representations can be uniquely provided by construction of a Schur vector basis, which explicitly encodes the subspaces of $(\mathbb C^d)^{\otimes t}$ which are irreducible with respect to the collective action of the unitary group and of the action of the symmetric group.

Before proceeding with construction of  Schur bases let us briefly describe how to construct matrix irreducible representations of the symmetric group related with a given Young diagram with the numbers of boxes in subsequent rows given by $r_1,\ldots,r_k$, $k\leq d$ and $\sum_j r_j=t$. Firstly let us introduce the notion of \textit{standard Young tableau}, which is a $t$-box Young diagram filled in with integers $1,\ldots,t$ in such a way that the numbers are increasing row-wise from left to right and column-wise from top to bottom, but not necesserilly in a strict order. With a given standard Young tableau we associate so-called \textit{Young symmetrizer}, which is a building block for constructing irreducible representations of the symmetric group. Abstractly, a Young symmetrizer is an element of a \textit{group algebra} of the symmetric group, that is a vector space involving formal linear combinations of group elements endowed with natural multiplication inherited from the group composition law. A Young symmetrizer is defined as:
\begin{equation}
    \label{YoungSym}
    Y_{r_1,\ldots,r_k}=\frac{1}{kl}\sum_{r=1}^k\sum_{c=1}^l \textrm{sgn}(p_c) p_r p_c,
\end{equation}
where the sum runs over all permutations preserving all rows and columns of the standard Young tableau, weighted with the parity of the column permutations. A matrix representation of a Young symmetrizer is simply specified by:
\begin{equation}
    \label{YoungSymRep}
    \hat{Y}_{r_1,\ldots,r_k}=\frac{1}{kl}\sum_{r=1}^k\sum_{c=1}^l \textrm{sgn}(p_c) O_{p_r} O_{p_c},
\end{equation}
in which the orthogonal matrices $O_p$ are already defined in formula \eqref{ORep} in the main text.

Let us label all inequivalent irreducible representations of $\textrm{U}(d)^{\otimes t}$ and $\textrm S_t$ corresponding to a given Young diagram by an index $i$.
Although the index $i$ labels uniquely all \textit{inequivalent} representations of both groups, it may happen that there exist \textit{equivalent} representations of both of them, which however correspond to \textit{different subspaces} of  $(\mathbb C^d)^{\otimes t}$. In order to describe this phenomenon let us construct a Schur basis for $i$-th irreducible subspace:
\begin{itemize}
\item let us denote a matrix representation of a Young symmetrizer related with a fixed irreducible subspace $i$ and of a fixed allowed filling with integers $\lambda$ as $\hat Y_i^{\lambda}$; such Young symmetrizer corresponds to a fixed standard Young tableau with a shape uniquely specified by lengths of its rows $r_1,\ldots,r_k$, therefore the index $i$ stays in one-to-one correspondence with a shape of the diagram;
    \item for every $\lambda$-th standard Young tableau of the $i$-th type, consisting of boxes organized into $k$ rows of lengths $r_{\mu}$, $1\leq \mu \leq k$, let us take the corresponding Young symmetrizer $\hat Y_i^{\lambda}$, and let us construct a linear subspace $L^i_{\lambda}=\{\hat Y_i^{\lambda}\ket{v}\},\,\ket{v}\in(\mathbb C^d)^{\otimes t}$, consisting of tensors of the \textit{symmetry type} $Y_i^{\lambda}$. The dimension of each space $L^i_{\lambda}$ is specified by the so-called Weyl character formula:
    \begin{equation}
        \label{WCHF}
        \textrm{dim}(L^i_{\lambda})=D_L^i=\prod_{1\leq \mu<\nu\leq k}\frac{r_{\mu}-r_{\nu}+\nu-\mu}{\nu-\mu},
    \end{equation}
    in which the product runs over all rows in the Young diagram of type $i$;
    \item note that each subspace $L_{\lambda}^i$ is invariant under the action of $U^{\otimes t}$, since: $\{U^{\otimes t}\hat Y_i^{\lambda}\ket{v}\}=\{\hat Y_i^{\lambda}U^{\otimes t}\ket{v}\}=\{\hat Y_i^{\lambda}\ket{v}\}$;
    \item let us assume that $\lambda=1$ corresponds to a \textit{normal Young tableau}, that is a tableau with natural row-wise filling of integers from left to right. Let us denote an orthonormal basis  of the subspace $L^i_1=\{\hat Y_i^1\ket{v}\}$ as $\{\ket{i,m,1}\}_{m=1}^{D_L^i}$. Since all other Young symmetrizers $\hat Y_i^{\lambda}$ from $i$-th class can be obtained from $\hat Y_i^1$ by action of some permutation $p_{\lambda}$, we have $\hat Y_i^{\lambda}=O_{p_{\lambda}}\hat Y_i^{1}$, and therefore orthonormal bases for other subspaces $L^i_{\lambda}$ can be obtained from the basis of $L_1^i$: 
    \begin{equation}
    \{\ket{i,m,\lambda}\}_{m=1}^{D_L^i}=     \{O_{p_{\lambda}}\ket{i,m,1}\}_{m=1}^{D_L^i};
    \end{equation}
    \item in this way we have obtained orthonormal bases $\{\ket{i,m,\lambda}\}_{m=1}^{D_L^i}$  for the subspaces $L^i_{\lambda}$; 
    \item on the other hand the subspaces $V^i_{m}$ generated by sets of vectors $\{\ket{i,m,\lambda}\}_{\lambda=1}^{D_V^i}$ are invariant under the action of the symmetric group $\textrm{S}_t$; they are called \textit{Specht moduli}; the dimension of $V^i_{m}$ is equal to the number of inequivalent \textit{standard Young tableau's} of a fixed shape and is specified by the so-called \textit{hook-length formula}:
     \begin{equation}
        \label{HLF}
        \textrm{dim}(V^i_{m})=D_V^i=\frac{t!}{\prod_{\mu,\nu}h(\mu,\nu)},
    \end{equation}
    in which the \textit{hook-length} $h(\mu,\nu)$ corresponding to a box placed in $\mu$-th row and $\nu$-th column is defined as the number of boxes in a hook constructed by taking all the boxes to the right and to the bottom; the product is performed over hook lengths corresponding to all boxes in a tableau;
    \item although the vectors $\{\ket{i,m,\lambda}\}_{\lambda=1}^{D_V^i}$ spanning the spaces $V^i_m$ are typically non-orthogonal, they can be orthogonalised, in such a way that each new vector belongs to the same symmetry type, and therefore to the same subspace $L^i_{\lambda}$; in this way we obtain an orthonormal basis for each of the symmetry type $i$, which can be presented as an array:
    \begin{eqnarray}
    \label{SchurBasis}
     L^i_1:\,\,&&\ket{i,1,1}\,\,\,\,\,\,\,\,\,\ket{i,2,1}\,\,\,\,\,\,\ldots\,\,\,\,\,\,\ket{i,D^i_L,1} \nonumber\\
     L^i_2:\,\,&&\ket{i,1,2}\,\,\,\,\,\,\,\,\,\ket{i,2,2}\,\,\,\,\,\,\ldots\,\,\,\,\,\,\ket{i,D^i_L,2} \nonumber\\
     && \ldots\ldots\ldots\ldots\ldots\ldots\ldots\ldots\ldots\ldots\ldots\ldots \nonumber\\
       L^i_{D^i_V}:\,\,&&\underbrace{\ket{i,1,D^i_V}}_{V^i_1}\,\,\underbrace{\ket{i,2,D^i_V}}_{V^i_2}\,\,\,\ldots\,\,\,\underbrace{\ket{i,D^i_L,D^i_V}}_{V^i_{D^i_L}}\nonumber\\
    \end{eqnarray}
    \item the above arrangement of the Schur vector basis indicates the essence of \textit{Schur-Weyl duality}: the basis vectors $\ket{i,m,\lambda}$ within each irreducible subspace $i$ can be seen as belonging to $D^i_V$ subspaces $L^i_{\lambda}$, which correspond to equivalent irreducible $D^i_L$-dimensional representations of the unitary group $\textrm{U}(d)$, or as belonging to $D^i_L$ subspaces $V^i_m$, which correspond to equivalent irreducible $D^i_V$-dimensional representations of the symmetric group $\textrm{S}_t$; 
\end{itemize}
After introducing the construction of the Schur vector basis, lets come back to the block-diagonalisation of the action of unitary and symmetric group on the tensor product space $(\mathbb C^d)^{\otimes t}$. For this aim we introduce the following operator basis for the space $\textrm{End}(\mathbb C^d)^{\otimes t}$:
\begin{equation}
    \label{FullPiBasis}
    \hat\Pi_{ij}^{m_1\lambda_1 m_2\lambda_2}=\ket{i,m_1,\lambda_1}\bra{j,m_2,\lambda_2}.
\end{equation}
The above operators fulfill the following property, which comes from orthonormality of the Schur vector basis \eqref{SchurBasis}:
\begin{equation}
    \label{SBOrt}
\hat\Pi_{ij}^{m_1\lambda_1 m_2\lambda_2}\hat\Pi_{kl}^{n_1\mu_1 n_2\mu_2}=\delta_{jk}\delta_{\lambda_2\mu_1}\delta_{m_2n_1}\hat\Pi_{il}^{m_1\lambda_1 n_2\mu_2}.    
\end{equation}
Note that due to the fact that operators \eqref{FullPiBasis} are outer products of basis vectors, their Hermitian conjugate is equivalent to transposition of corresponding indices:
\begin{equation}
    \label{FullPiBasisCT}
     \hat\Pi_{ij}^{m_1\lambda_1 m_2\lambda_2\dagger}=\ket{j,m_2,\lambda_2}\bra{i,m_1,\lambda_1}=\hat\Pi_{ji}^{m_2\lambda_2 m_1\lambda_1}.
\end{equation}
Let us also introduce two other operators:
\begin{eqnarray}
\label{PiBasis}
\hat\Pi^{\lambda_1\lambda_2}_i&=&\sum_{m=1}^{D^i_L}\hat\Pi_{ii}^{m\lambda_1 m\lambda_2},\nonumber\\
\hat\Pi^{m_1 m_2}_i&=&\sum_{\lambda=1}^{D^i_V}\hat\Pi_{ii}^{m_1\lambda m_2\lambda}.
\end{eqnarray}
We refer to the sets of operators $\{\hat\Pi_i^{\lambda_1\lambda_2}\}_{\lambda_1\lambda_2}$ and $\{\hat\Pi_i^{m_1m_2}\}_{m_1m_2}$ as \textit{Schur operator bases}.
They are especially important, since they represent the block-diagonalisation of the action of both groups on the tensor space, namely:
\begin{equation}
    \label{Ut}
    U^{\otimes t}=\sum_{i}\frac{1}{D^i_V}\sum_{m_1,m_2=1}^{D^i_L}\operatorname{Tr}\left(U^{\otimes t}\hat\Pi^{m_1 m_2\dagger}_i\right)\hat\Pi^{m_1 m_2}_i,
\end{equation}
and analogously:
\begin{equation}
    \label{Op}
    O_{p\in S_t}=\sum_{i}\frac{1}{D^i_L}\sum_{\lambda_1,\lambda_2=1}^{D^i_V}\operatorname{Tr}\left(O_p\hat\Pi^{\lambda_1 \lambda_2\dagger}_i\right)\hat\Pi^{\lambda_1 \lambda_2}_i.
\end{equation}
The normalisation factors come from the fact, that:
\begin{eqnarray}
\label{PiBasisTrace}
\Tr(\hat\Pi^{m_1 m_2}_i)&=&\delta_{m_1m_2}D^i_V,\nonumber\\
\Tr(\hat\Pi^{\lambda_1 \lambda_2}_i)&=&\delta_{\lambda_1 \lambda_2}D^i_L.
\end{eqnarray}
Both the  formulas \eqref{Ut} and \eqref{Op} are simple consequences of the construction of the Schur vector basis \eqref{SchurBasis}, namely that the subspaces $L^i_{\lambda}$ correspond to equivalent irreducible representations of the unitary group, and that the subspaces $V^i_m$ correspond to equivalent irreducible representations of the symmetric group. In order to \textit{see} the block-diagonalisation in a matrix representation of the above operators, let us introduce two matrix bases in the space of operators $\textrm{End}(\mathbb C^d)^{\otimes t}$:
\begin{itemize}
    \item \textbf{U-basis}: for each irreducible subspace $i$ we order the vectors in the table \eqref{SchurBasis} \textit{row-wise} and then enumerate all such ordered vectors for all subspaces $i$ by a single index $\mu$; let us denote such obtained basis by $\{\ket{e_{\mu}}\}_{\mu=1}^{d^t}$; then arbitrary operator $U^{\otimes t}$ is block-diagonal in matrix basis $\{\ket{e_{\mu}}\bra{e_{\nu}}\}_{\mu,\nu=1}^{d^t}$;
    \item \textbf{S-basis}:  for each irreducible subspace $i$ we order the vectors in the table \eqref{SchurBasis} \textit{column-wise} and then enumerate all such ordered vectors for all subspaces $i$ by a single index $\mu$; let us denote such obtained basis by $\{\ket{\tilde e_{\mu}}\}_{\mu=1}^{d^t}$; then arbitrary operator $O_{p\in \textrm{S}_t}$ is block-diagonal in matrix basis $\{\ket{\tilde e_{\mu}}\bra{\tilde e_{\nu}}\}_{\mu,\nu=1}^{d^t}$;
\end{itemize}
What is very important the block-diagonalisation of the tensor power \eqref{Ut} holds as well for any invertible complex matrix $\alpha$:
\begin{equation}
    \label{GLDecomp}
    \alpha^{\otimes t}=\sum_{i}\frac{1}{D^i_V}\sum_{m_1,m_2=1}^{D^i_L}\operatorname{Tr}\left(\alpha^{\otimes t}\hat\Pi^{m_1 m_2\dagger}_i\right)\hat\Pi^{m_1 m_2}_i.
\end{equation}
This is because the groups $\textrm{GL}(d,\mathbb C)$ of complex invertible $d\times d$ matrices and the symmetric group $\textrm{S}_t$, treated as matrix algebras acting on the tensor space $ (\mathbb C^d)^{\otimes t}$, are also commutants of each other, and all the construction of Schur vector basis for their case is entirely the same. The only difference is that the subspaces $L^i_\lambda$ are now irreducibly invariant under the collective action of the general linear group $\textrm{GL}(d,\mathbb C)$.

We can also define projectors onto the entire $i$-th irreducible subspaces:
\begin{equation}
    \label{iProj}
    \hat\Pi_i=\sum_{\lambda=1}^{D^i_V}\sum_{m=1}^{D^i_L}\hat\Pi_{ii}^{m\lambda m\lambda}.
\end{equation}
They fulfill the following property:
\begin{equation}
    \label{iProjML}
     \hat\Pi_i=\sum_{m=1}^{D^i_L}\hat\Pi_{i}^{mm}=\sum_{\lambda=1}^{D^i_V}\hat\Pi_{i}^{\lambda\lambda},
\end{equation}
which follows directly from definition \eqref{PiBasis}.

\begin{lem}
The operators \eqref{PiBasis} fulfill the following important property, which will be extensively used in our considerations:
\begin{eqnarray}
    \label{ProdPiBasisGen}
    \hat\Pi^{\lambda_1\lambda_2}_k  \hat\Pi^{m_1 m_2}_l&=&\delta_{kl}\hat\Pi^{m_1 \lambda_1m_2\lambda_2}_{kl}\nonumber\\
    \hat\Pi^{m_1 m_2}_l\hat\Pi^{\lambda_1\lambda_2}_k&=&\delta_{kl}\hat\Pi^{m_1 \lambda_1m_2\lambda_2}_{lk}
\end{eqnarray}
\end{lem}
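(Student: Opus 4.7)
My plan is to prove both identities by direct expansion followed by a single application of the orthogonality rule \eqref{SBOrt} for the elementary operators $\hat\Pi_{ij}^{m_1\lambda_1 m_2\lambda_2}$. The key observation is that, by definition \eqref{PiBasis}, the operator $\hat\Pi^{\lambda_1\lambda_2}_k$ is a diagonal sum over the row index $m$, whereas $\hat\Pi^{m_1 m_2}_l$ is a diagonal sum over the column index $\lambda$. Their product is therefore a double sum of products of elementary Schur operators, and the orthogonality rule forces a Kronecker delta on each summation variable, collapsing the double sum to a single surviving term.

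Concretely, for the first identity I would expand
$$
\hat\Pi^{\lambda_1\lambda_2}_k \hat\Pi^{m_1 m_2}_l
= \sum_{m=1}^{D^k_L}\sum_{\lambda=1}^{D^l_V} \hat\Pi_{kk}^{m\lambda_1\,m\lambda_2}\,\hat\Pi_{ll}^{m_1\lambda\,m_2\lambda},
$$
and apply \eqref{SBOrt} to each summand. The clash of middle lower indices produces $\delta_{kl}$, the clash of central column indices produces $\delta_{\lambda_2\lambda}$, and the clash of central row indices produces $\delta_{mm_1}$. After summing over $m$ and $\lambda$ the deltas fix $m=m_1$ and $\lambda=\lambda_2$, and by \eqref{FullPiBasis} the unique survivor is $\delta_{kl}\hat\Pi_{kl}^{m_1\lambda_1\,m_2\lambda_2}$, which is precisely the right-hand side of the first claimed equality.

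The second identity follows from the same calculation with the two factors swapped; the residual delta pattern becomes $\delta_{lk}$, $\delta_{\lambda\lambda_1}$, $\delta_{m_2m}$, forcing $\lambda=\lambda_1$ and $m=m_2$ and producing $\delta_{kl}\hat\Pi_{lk}^{m_1\lambda_1\,m_2\lambda_2}$. I do not anticipate any real obstacle here: the entire content of the lemma is bookkeeping on Kronecker deltas inherited from orthonormality of the Schur vector basis, and the only care needed is to track which internal index of the first factor is contracted with which of the second. As a useful consistency check, summing the two lines over the distinguishing inner index and using \eqref{PiBasis} should reproduce the commutativity of the two Schur operator bases already recorded in \eqref{PiComm0}, while the difference between the lower-index orderings $(k,l)$ and $(l,k)$ on the two sides accounts for the fact that the elementary building blocks $\hat\Pi_{kl}^{\cdot}$ themselves do not commute before one performs that inner summation.
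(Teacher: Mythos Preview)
Your proof is correct and follows essentially the same route as the paper's own proof: expand each factor via the definition \eqref{PiBasis}, apply the orthogonality relation \eqref{SBOrt} term by term to produce the three Kronecker deltas $\delta_{kl}$, $\delta_{\lambda_2\lambda}$ (resp.\ $\delta_{\lambda\lambda_1}$), $\delta_{mm_1}$ (resp.\ $\delta_{m_2m}$), and collapse the double sum to the single surviving elementary operator. The only addition in your write-up is the closing consistency check with \eqref{PiComm0}, which the paper records separately as a corollary.
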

\begin{proof}
Using definition \eqref{PiBasis} and the orthogonality relations \eqref{SBOrt} we obtain:
\begin{eqnarray}
 \hat\Pi^{\lambda_1\lambda_2}_k  \hat\Pi^{m_1 m_2}_l&=&\sum_{\lambda,m}\hat\Pi^{m\lambda_1m\lambda_2}_{kk}  \hat\Pi^{m_1\lambda m_2\lambda}_{ll}\nonumber\\
 &=&\sum_{\lambda,m}\hat\Pi^{m\lambda_1m_2\lambda}_{kl}\delta_{kl}\delta_{mm_1}\delta_{\lambda_2\lambda}\nonumber\\
 &=&\delta_{kl}\hat\Pi^{m_1\lambda_1m_2\lambda_2}_{kl}.
\end{eqnarray}
Similarly:
\begin{eqnarray}
  \hat\Pi^{m_1 m_2}_l\hat\Pi^{\lambda_1\lambda_2}_k &=&\sum_{m,\lambda}\hat\Pi^{m_1\lambda m_2\lambda}_{ll}\hat\Pi^{m\lambda_1m\lambda_2}_{kk}\nonumber\\
 &=&\sum_{m,\lambda}\hat\Pi^{m_1\lambda m\lambda_2}_{lk}\delta_{kl}\delta_{m_2m}\delta_{\lambda\lambda_1}\nonumber\\
 &=&\delta_{kl}\hat\Pi^{m_1\lambda_1m_2\lambda_2}_{lk}.
\end{eqnarray}
\end{proof}
As a corollary we obtain the following three properties:
\begin{enumerate}
    \item The operators \eqref{PiBasis} commute:
    \begin{equation}
    \label{PiCommute}
    \left[ \hat\Pi^{\lambda_1\lambda_2}_k,  \hat\Pi^{m_1 m_2}_l\right]=0.
\end{equation}
\item The product of the operators for the same irreducible subspace reads:
\begin{equation}
    \label{ProdPiBasis}
  \hat\Pi^{\lambda_1\lambda_2}_i  \hat\Pi^{m_1 m_2}_i=\hat\Pi^{m_1 \lambda_1m_2\lambda_2}_{ii}.
\end{equation}
\item Trace of their product reads:
\begin{equation}
    \label{TracePiBasis}
\operatorname{Tr}\left(  \hat\Pi^{\lambda_1\lambda_2}_k  \hat\Pi^{m_1 m_2}_l\right)=\delta_{kl}\delta_{\lambda_1\lambda_2}\delta_{m_1m_2}.
\end{equation}
\end{enumerate}
The properties \eqref{PiCommute} and\eqref{ProdPiBasis} are obvious consequences of \eqref{ProdPiBasisGen}, the last one follows from \eqref{ProdPiBasisGen} and orthonormality of basis elements in the definition \eqref{FullPiBasis}. 

As a last remark on the discussed operators let us note that the oprators $\hat\Pi_i$ play the role of the identity operator on the irreducible subspaces:
\begin{eqnarray}
\label{PiIdent}
&&\hat\Pi_{i}^{\lambda_1\lambda_2}\hat\Pi_j=\hat\Pi_j\hat\Pi_{i}^{\lambda_1\lambda_2}=\delta_{ij}\hat\Pi_{i}^{\lambda_1\lambda_2}\nonumber\\
&&\hat\Pi_{i}^{m_1m_2}\hat\Pi_j=\hat\Pi_j\hat\Pi_{i}^{m_1m_2}=\delta_{ij}\hat\Pi_{i}^{m_1m_2}.
\end{eqnarray}
The proof of the first property goes as follows, all the rest are entirely analogous:
\begin{eqnarray}
\hat\Pi_{i}^{\lambda_1\lambda_2}\hat\Pi_j=\sum_m\hat\Pi_{i}^{\lambda_1\lambda_2}\hat\Pi^{mm}_j=\delta_{ij}\sum_m\hat\Pi_{ij}^{m\lambda_1m\lambda_2}=\delta_{ij}\hat\Pi_{i}^{\lambda_1\lambda_2}.\nonumber\\
\end{eqnarray}

Decompositions \eqref{Ut} and \eqref{Op} are often in the literature presented in the \textit{virtual tensor product} form. This form is useful in the context of application of Schur's Lemma to averaging, therefore let us for completeness briefly recall it. Let us fix $i$-th irreducible subspace. Let $\mathcal I_i$ denote trivial immersion of the matrices defined on the $i$-th subspace into entire space of matrices on $\mathbb C^{d^t}$. Let $\{\pi_i^{m_1m_2}\}$ be a standard matrix basis on the space of $D^i_L\times D^i_L$ matrices. Then the operator $\hat \Pi_i^{m_1m_2}$ has the following representation in terms of the matrices $\pi_i^{m_1m_2}$:
\begin{eqnarray}
\label{PiMTensor}
\textrm{U-basis:}\,\,\,\hat \Pi_i^{m_1m_2}&=&\mathcal I_i\left(\id_{D^i_V}\otimes \pi_i^{m_1m_2}\right)\nonumber\\
\textrm{S-basis:}\,\,\,\hat \Pi_i^{m_1m_2}&=&\mathcal I_i\left(\pi_i^{m_1m_2}\otimes\id_{D^i_V} \right).
\end{eqnarray}
Similarly for any operator $\hat \Pi_i^{\lambda_1\lambda_2}$ we have:
\begin{eqnarray}
\label{PiLTensor}
\textrm{U-basis:}\,\,\,\hat \Pi_i^{\lambda_1\lambda_2}&=&\mathcal I_i\left(\pi_i^{\lambda_1\lambda_2}\otimes\id_{D^i_L} \right)\nonumber\\
\textrm{S-basis:}\,\,\,\hat \Pi_i^{\lambda_1\lambda_2}&=&\mathcal I_i\left(\id_{D^i_L}\otimes \pi_i^{\lambda_1\lambda_2}\right),
\end{eqnarray}
in which the operators $\{\pi_i^{\lambda_1\lambda_2}\}$
are elements of the standard matrix basis on the space of 
$D^i_V\times D^i_V$ matrices. Both sets of formulas \eqref{PiMTensor} and \eqref{PiLTensor} can be easily obtained by formally treating the elements of Schur basis $\ket{i,m,\lambda}$ for a given \textit{fixed} irreducible subspace $i$ as elementary tensors $\ket{m}_i\otimes \ket{\lambda}_i$, in which the vectors $\{m\}_i$ and $\{\lambda\}_i$ span \textit{virtual subspaces} of dimensions $D^i_L$ and $D^i_V$ respectively. In this formulation the joint action of the unitary and symmetric group on the space 
$(\mathbb C^d)^{\otimes t}$ can be represented as:
\begin{eqnarray}
\label{USTensor}
\textrm{U-basis:}\,\,\,&&\bigoplus_i (O_p)_i\otimes U_i\nonumber\\
\textrm{S-basis:}\,\,\,&&\bigoplus_i U_i\otimes(O_p)_i,
\end{eqnarray}
in which the matrix elements of operators $(O_p)_i$ and $U_i$ are defined as:
\begin{eqnarray}
&&\left[(O_p)_i\right]_{\lambda_1\lambda_2}=\operatorname{Tr}\left(O_p\hat\Pi_i^{\lambda_1\lambda_2\dagger}\right)\nonumber\\
&&\left[U_i\right]_{m_1m_2}=\operatorname{Tr}\left(U^{\otimes t}\hat\Pi_i^{m_1m_2\dagger}\right).
\end{eqnarray}

\subsection{Schur's Lemma and its applications for averaging}
\label{App:SLemmas}

Let us start from a more abstract formulation of the Schur's Lemma. Let $\mathcal H_1$ and $\mathcal H_2$ be two complex vector spaces, let $\pi_{\mathcal H_1}$ and $\pi_{\mathcal H_2}$ be two representations of some group $G$ on these spaces and let $f:\mathcal H_1\mapsto\mathcal H_2$ be a \textit{$G$-equivariant} map between the representation spaces, which means that the map commutes with the action of the group:
\begin{equation}
    \label{equivMap}
    \pi_{\mathcal H_2}\circ f=f\circ\pi_{\mathcal H_1}.
\end{equation}
Then the following two properties hold (Schur's Lemmas):
\begin{lem}
If $\pi_{\mathcal H_1}$ and $\pi_{\mathcal H_2}$ are irreducible representations of $G$ then:
\begin{itemize}
    \item if $\mathcal H_1$ and $\mathcal H_2$ are non-isomorphic, then the only $G$-equivariant map is zero, $f\equiv 0$;
    \item  if $\mathcal H_1$ and $\mathcal H_2$ are isomorphic and if $\pi_{\mathcal H_1}$ and $\pi_{\mathcal H_2}$ are equivalent, then the only $G$-equivariant maps are multiples of identity, $f\equiv c\id_{\mathcal H_1}$.
\end{itemize}
\end{lem}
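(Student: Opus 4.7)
The plan is to exploit the fact that any $G$-equivariant linear map sends invariant subspaces to invariant subspaces, and then to combine this rigidity with finite-dimensionality of the representation spaces over $\mathbb C$ via a standard eigenvalue argument. Throughout I assume the representation spaces are finite dimensional, which is consistent with the setting of the whole paper.

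The first step I would carry out is a short observation: for any $G$-equivariant $f:\mathcal H_1\to\mathcal H_2$, both $\ker f\subseteq\mathcal H_1$ and $\operatorname{im} f\subseteq\mathcal H_2$ are $G$-invariant subspaces with respect to $\pi_{\mathcal H_1}$ and $\pi_{\mathcal H_2}$ respectively. This is an immediate unpacking of the intertwining condition $\pi_{\mathcal H_2}\circ f=f\circ\pi_{\mathcal H_1}$: if $v\in\ker f$ then $f(\pi_{\mathcal H_1}(g)v)=\pi_{\mathcal H_2}(g)f(v)=0$, and for any $w\in\mathcal H_1$ the image vector $f(w)$ satisfies $\pi_{\mathcal H_2}(g)f(w)=f(\pi_{\mathcal H_1}(g)w)\in\operatorname{im} f$. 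This single observation is the engine of the entire lemma.

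For the first bullet, irreducibility of $\pi_{\mathcal H_1}$ forces $\ker f\in\{0,\mathcal H_1\}$ and irreducibility of $\pi_{\mathcal H_2}$ forces $\operatorname{im} f\in\{0,\mathcal H_2\}$. A nonzero $f$ must then satisfy $\ker f=\{0\}$ and $\operatorname{im} f=\mathcal H_2$ simultaneously, making $f$ a linear isomorphism that intertwines the two representations; this contradicts the assumed non-isomorphism of $\mathcal H_1$ and $\mathcal H_2$ as $G$-representations, so only $f\equiv 0$ survives.

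For the second bullet I would first fix an intertwiner $\phi:\mathcal H_1\to\mathcal H_2$ realising the equivalence of representations and define the endomorphism $\tilde f=\phi^{-1}\circ f:\mathcal H_1\to\mathcal H_1$, which is $G$-equivariant with respect to $\pi_{\mathcal H_1}$ on both sides. Since $\mathcal H_1$ is finite dimensional and $\mathbb C$ is algebraically closed, $\tilde f$ admits at least one eigenvalue $c$, and the operator $\tilde f-c\,\mathrm{id}_{\mathcal H_1}$ is again $G$-equivariant with a nonzero kernel containing the corresponding eigenvector. By the invariance observation above, this kernel is a nonzero $G$-invariant subspace of $\mathcal H_1$, and irreducibility collapses it to all of $\mathcal H_1$, giving $\tilde f=c\,\mathrm{id}_{\mathcal H_1}$ and hence $f=c\,\phi$, which is $c$ times the identity once $\mathcal H_1$ is identified with $\mathcal H_2$ through $\phi$. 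The only genuinely delicate point, and hence the main obstacle, is the existence of the eigenvalue in the second bullet: this is exactly where algebraic closedness of $\mathbb C$ together with finite-dimensionality is used essentially, and the conclusion would fail over $\mathbb R$ in general.
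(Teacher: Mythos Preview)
Your argument is the standard textbook proof of Schur's Lemma and is correct. Note, however, that the paper does not actually supply its own proof of this lemma: it is stated in Appendix~\ref{App:SLemmas} as a well-known fact, followed only by the remark that the second bullet requires an algebraically closed ground field, and is then immediately specialised and applied. So there is no ``paper's proof'' to compare against; you have simply filled in what the authors left as background.

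One small comment on presentation: in the first bullet the paper's phrasing ``$\mathcal H_1$ and $\mathcal H_2$ are non-isomorphic'' is slightly ambiguous between non-isomorphic as vector spaces and non-isomorphic as $G$-representations. Your proof implicitly (and correctly) reads it as the latter, which is the intended meaning in the Schur-Lemma context; you might make that explicit, since a nonzero intertwiner yields precisely an isomorphism of $G$-modules, not merely of vector spaces.
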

It is worth noticing that the second part of the above lemma holds only for representations defined on vector spaces over \textit{algebraically closed fields}, as $\mathbb C$.

Let us now specify $\mathcal H_1$ and $\mathcal H_2$ as $\mathbb C^{t_1}$ and $\mathbb C^{t_2}$ respectively and let $G=\textrm{U}(d)$. Then the maps $f$ are linear operators $A: \mathbb C^{t_1}\mapsto \mathbb C^{t_2}$, the representations $\pi_{\mathcal H_1}$ and $\pi_{\mathcal H_2}$ are irreducible representations of the unitary group on respectively $\mathbb C^{t_1}$ and $\mathbb C^{t_2}$  consisting of matrices $U_1$ and $U_2$,  and the $G$-equivariance means:
\begin{equation}
    U_1A=AU_2\Longleftrightarrow  U_1A(U_2)^{-1}=A\Longleftrightarrow U_1A(U_2)^{\dagger}=A.
\end{equation}
For discussed situation Schur Lemma's take the following form:
\begin{lem}
If $U_1$ and $U_2$ are irreducible representations of $\operatorname{U}(d)$ on $\mathbb C^{t_1}$ and $\mathbb C^{t_2}$ then:
\begin{itemize}
    \item for $t_1\neq t_2$ the following implication holds:
    \begin{equation}
        \label{Schur1}
        U_1A(U_2)^{\dagger}=A \Longrightarrow A\equiv 0;
    \end{equation}
    \item  for $t_1=t_2$ and for equivalent $U_1$ and $U_2$, the following implication holds:
    \begin{equation}
        \label{Schur2}
        U_1A(U_2)^{\dagger}=A \Longrightarrow A\equiv c\id.
    \end{equation}
\end{itemize}
\end{lem}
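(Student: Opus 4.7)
The plan is to observe that, by unitarity of $U_2$, the hypothesis $U_1 A U_2^\dagger = A$ is equivalent to the intertwining identity $U_1 A = A U_2$, so the statement is exactly the specialization of the abstract Schur's Lemma recalled just above, with $\pi_{\mathcal H_1} = U_1$, $\pi_{\mathcal H_2} = U_2$, and $f = A$. Both parts then follow by unpacking the standard kernel/image and eigenspace arguments in the present concrete setting, and I would present them directly rather than invoking the abstract version as a black box, to keep the appendix self-contained.

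For part (i), I would first observe that the intertwining relation forces $\ker A \subseteq \mathbb C^{t_2}$ to be $U_2$-invariant: if $Av=0$ then $A(U_2 v) = U_1 A v = 0$. Dually, $\operatorname{Im} A \subseteq \mathbb C^{t_1}$ is $U_1$-invariant, since $U_1(Av) = A(U_2 v) \in \operatorname{Im} A$. Irreducibility of $U_1$ and $U_2$ forces each of these subspaces to be either $\{0\}$ or the whole ambient space. If $A$ were nonzero, one would have simultaneously $\ker A = \{0\}$ and $\operatorname{Im} A = \mathbb C^{t_1}$, so $A$ would be a linear isomorphism $\mathbb C^{t_2} \to \mathbb C^{t_1}$, which is impossible for $t_1 \neq t_2$. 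Hence $A = 0$.

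For part (ii), I would exploit the assumed equivalence by fixing an invertible intertwiner $S$ with $U_1 S = S U_2$. Setting $\tilde A = S^{-1} A$, one obtains $U_2 \tilde A = \tilde A U_2$, reducing the claim to: any operator commuting with every element of an irreducible unitary representation on $\mathbb C^t$ is a scalar. Here I invoke algebraic closure of $\mathbb C$ to pick an eigenvalue $\lambda$ of $\tilde A$; the eigenspace $\ker(\tilde A - \lambda \id)$ is nonzero and $U_2$-invariant (because $\tilde A - \lambda \id$ still commutes with $U_2$), so by irreducibility it must equal the full $\mathbb C^{t_2}$. Therefore $\tilde A = \lambda \id$ and $A = \lambda S$. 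Under the usual identification of the two equivalent representation spaces via the intertwiner $S$, this reads $A \equiv c\,\id$.

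The only step in the whole plan that is not a purely combinatorial invariant-subspace manipulation is the existence of an eigenvalue used in part (ii); this is precisely where algebraic closure of $\mathbb C$ is indispensable and where the real-field analogue of the statement fails. I therefore do not anticipate any genuine obstacle, only the need to make the identification of equivalent representation spaces via $S$ explicit so that the conclusion $A \equiv c\,\id$ is unambiguous.
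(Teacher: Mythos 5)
Your proposal is correct and follows the same route the paper takes: the paper derives this lemma purely by rewriting $U_1A(U_2)^{\dagger}=A$ as the intertwining relation $U_1A=AU_2$ and declaring it a specialization of the abstract Schur's Lemma stated just above it, with no further proof given. Your additional kernel/image argument for the first part and the eigenvalue argument (with the explicit intertwiner $S$ and the remark on algebraic closure) for the second are the standard proofs, are correct, and in fact make explicit the identification of the two equivalent representation spaces that the paper leaves implicit when it writes $A\equiv c\id$.
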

We will now apply the above version of Schur's Lemma to averaging over unitary group. We obtain the following two further lemmas:
\begin{lem}
\label{lemAv1}
Let $\{U_k\}$ be a family of non-equivalent irreducible representations of the unitary group on some complex vector spaces $\mathcal H_k$ and let $\mathcal H=\bigoplus_k \mathcal H_k$. Let $\rho\in \mathcal H$ be arbitrary linear operator on $\mathcal H$ and let $\mathcal I_k$ be trivial immersions of operators on $\mathcal H_k$ into operators on $\mathcal H$. Then the averaging of $\rho$ over representations $\{U_k\}$ is diagonal in the index $k$, namely:
\begin{equation}
\label{Schur1Twirl}
    \sum_{k,l}\int \mathcal I_k(U_k)\rho\mathcal I_l(U^\dagger_l)\operatorname{d}U
=\sum_{k}\int \mathcal I_k(U_k)\rho\mathcal I_k(U^\dagger_k)\operatorname{d}U.
\end{equation}
\end{lem}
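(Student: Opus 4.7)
The plan is to reduce the identity to a blockwise vanishing statement and then invoke the non-equivalent version of Schur's Lemma \eqref{Schur1} to kill every off-diagonal block. First I would decompose $\rho$ according to the direct sum structure $\mathcal H = \bigoplus_k \mathcal H_k$: denoting by $P_k$ the orthogonal projector onto $\mathcal H_k$, set $\rho_{kl} = P_k \rho P_l$, naturally viewed as a linear map from $\mathcal H_l$ to $\mathcal H_k$. Because $\mathcal I_k(U_k)$ vanishes on the complement of $\mathcal H_k$ and $\mathcal I_l(U_l^\dagger)$ vanishes on the complement of $\mathcal H_l$, the integrand depends on $\rho$ only through its $(k,l)$-block, so the $(k,l)$-summand of the left-hand side of \eqref{Schur1Twirl} is the trivial immersion into $\operatorname{End}(\mathcal H)$ of
\begin{equation}
    \tilde A_{kl} \equiv \int U_k\, \rho_{kl}\, U_l^\dagger \operatorname{d}U.
\end{equation}

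Next I would exploit left-invariance of the normalised Haar measure $\operatorname{d}U$ on $\operatorname{U}(d)$. Substituting $U \to VU$ for an arbitrary fixed $V \in \operatorname{U}(d)$ factors the representation matrices out of the integral and yields the intertwining relation
\begin{equation}
    V_k\, \tilde A_{kl}\, V_l^\dagger = \tilde A_{kl},
\end{equation}
equivalently $V_k \tilde A_{kl} = \tilde A_{kl} V_l$. Since for $k\neq l$ the representations $U_k$ and $U_l$ are by hypothesis irreducible and inequivalent, Schur's Lemma in the form \eqref{Schur1} forces $\tilde A_{kl} = 0$, and hence the entire off-diagonal contribution to the double sum vanishes. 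Substituting this back reduces the double sum to its diagonal, which is exactly the right-hand side of \eqref{Schur1Twirl}.

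The main obstacle is not analytic but bookkeeping: one must carefully track how the trivial immersions $\mathcal I_k,\mathcal I_l$ interact with $\rho$ so that the object $\tilde A_{kl}$ is genuinely an intertwiner $\mathcal H_l \to \mathcal H_k$ between the two irreducible representations, rather than an operator on the full space to which Schur's Lemma does not directly apply. Once this identification is in place, the inequivalence assumption on $\{U_k\}$ and the invariance of $\operatorname{d}U$ do all the work.
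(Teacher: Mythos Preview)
Your proposal is correct and follows essentially the same route as the paper: define the $(k,l)$-block of the averaged operator, use the translation invariance of the Haar measure to show it satisfies the intertwining relation $V_k\,\tilde A_{kl}\,V_l^\dagger=\tilde A_{kl}$, and then apply Schur's Lemma \eqref{Schur1} to annihilate the off-diagonal terms. The only cosmetic difference is that you introduce the projectors $P_k$ and the blocks $\rho_{kl}$ explicitly, whereas the paper works directly with $\mathcal U_{kl}(\rho)=\int \mathcal I_k(U_k)\rho\,\mathcal I_l(U_l^\dagger)\,\mathrm dU$; the underlying argument is identical.
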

\begin{proof}
Let $\mathcal U_{kl}(\rho)=\int \mathcal I_k(U_k)\rho\mathcal I_l(U^\dagger_l)\operatorname{d}U$. Due to the invariance of Haar measure on unitary group the operator $\mathcal U_{kl}(\rho)$ is invariant with respect to left and right action by the unitaries corresponding to respective irreducible representations:
\begin{eqnarray}
\mathcal I_k(\tilde U_k)\mathcal U_{kl}(\rho)\mathcal I_l(\tilde U^{\dagger}_l)&=&\int \mathcal I_k(\tilde U_kU_k)\rho\mathcal I_l(U^\dagger_l\tilde U^{\dagger}_l)\operatorname{d}U\nonumber\\
&=&\mathcal U_{kl}(\rho).\nonumber\\
\end{eqnarray}
Since by assumption irreps corresponding to different values of $k$ and $l$ are non-equivalent, from \eqref{Schur1} we obtain that $\mathcal U_{kl}(\rho)=0$ for $k\neq l$, hence the sum in \eqref{Schur1Twirl} contains only terms diagonal in irreps indices.
\end{proof}

\begin{lem}
\label{lemAv2}
Let $U_k$ be an irreducible representation of the unitary group on some complex vector space $\mathcal H_k$ of dimension $D_{\mathcal H}^k$. Let $\rho\in \mathcal H_k$ be arbitrary linear operator on $\mathcal H_k$. Then the effect of averaging of $\rho$ over representation $U_k$ is proportional to the identity operator on $\mathcal H_k$:
\begin{equation}
\label{Schur1Twirl2}
    \int U_k\rho U^\dagger_k\operatorname{d}U
=\frac{1}{D_{\mathcal H}^k}\operatorname{Tr}(\rho)\id_{\mathcal H_k}.
\end{equation}
\end{lem}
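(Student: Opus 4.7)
The plan is a single-shot application of Schur's Lemma \eqref{Schur2} to the averaged operator. First I would define $A \equiv \int U_k \rho U_k^\dagger \operatorname{d}U$ and use the invariance of the normalized Haar measure on the unitary group together with the fact that the representation $U \mapsto U_k$ is a homomorphism to conclude that $\tilde U_k A \tilde U_k^\dagger = A$ for every $\tilde U_k$ in the image of the representation. Concretely, the substitution $U \mapsto \tilde U^{-1} U$ inside the integral leaves $\operatorname{d}U$ unchanged and produces the conjugated integrand on the nose; unitarity of the representation, $U_k^\dagger = U_k^{-1}$, handles the dagger side. Rewriting this invariance as $\tilde U_k A = A \tilde U_k$ identifies $A$ as an intertwiner of the irreducible representation $U_k$ with itself.

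The second part of Schur's Lemma, in the form \eqref{Schur2} and applicable because $\mathcal H_k$ is a complex (hence algebraically closed) vector space, then immediately forces $A = c\, \id_{\mathcal H_k}$ for a scalar $c$. Taking the trace of both sides fixes this constant: cyclicity of the trace combined with the normalization $\int \operatorname{d}U = 1$ gives $\operatorname{Tr}(A) = \operatorname{Tr}(\rho)$ on the left and $c\, D_{\mathcal H}^k$ on the right, so $c = \operatorname{Tr}(\rho)/D_{\mathcal H}^k$, which is precisely \eqref{Schur1Twirl2}.

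There is no real obstacle — the argument is entirely textbook. The two points worth flagging are (i) that only the left-invariance of the Haar measure is needed to derive the intertwining property, and (ii) that the leap from ``$A$ commutes with every $\tilde U_k$'' to ``$A \propto \id_{\mathcal H_k}$'' crucially uses that the scalar field is algebraically closed. Together with Lemma \ref{lemAv1}, the present lemma supplies the two canonical ingredients — block-diagonality across inequivalent irreps and proportionality to identity within each — that underlie the closed-form expressions for the unitary and generalised twirling maps derived in the main text.
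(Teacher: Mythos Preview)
Your proposal is correct and follows essentially the same approach as the paper: define the averaged operator, use Haar-measure invariance to establish that it commutes with every $\tilde U_k$, invoke Schur's Lemma \eqref{Schur2} to conclude proportionality to the identity, and fix the constant by tracing. Your additional remarks on left-invariance and algebraic closure are apt but not present in the paper's more terse version.
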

\begin{proof}
Let $\mathcal U_{k}(\rho)= \int U_k\rho U^\dagger_k\operatorname{d}U$. Analogously to the previous case due to the invariance of Haar measure on unitary group the operator $\mathcal U_{k}(\rho)$ is invariant with respect to left and right action by the unitaries corresponding to respective irreducible representations:
\begin{eqnarray}
\tilde U_k\mathcal U_{k}(\rho)\tilde U^{\dagger}_k=\int \tilde U_kU_k\rho U^\dagger_k\tilde U^{\dagger}_k\operatorname{d}U=\mathcal U_{k}(\rho).\nonumber\\
\end{eqnarray}
Therefore according to Schur's Lemma \eqref{Schur2}: $$\mathcal U_{k}(\rho)=c\id_{\mathcal H_k}.$$ In order to find the constant $c$ we calculate the trace of both sides of this property:
\begin{eqnarray}
\operatorname{Tr}\left(\mathcal U_{k}(\rho)\right)&=&\operatorname{Tr}\left(c\id_{\mathcal H_k}\right)\nonumber\\
\operatorname{Tr}\left(\int U_k\rho U^\dagger_k\operatorname{d}U\right)&=&c\operatorname{Tr}\left(\id_{\mathcal H_k}\right)\nonumber\\
\int\operatorname{Tr}\left( U_k\rho U^\dagger_k\right)\operatorname{d}U&=&cD^k_{\mathcal H}\nonumber\\
\int\operatorname{Tr}\left(\rho\right)\operatorname{d}U&=&cD^k_{\mathcal H}\nonumber\\
\operatorname{Tr}\left(\rho\right)&=&cD^k_{\mathcal H},
\end{eqnarray}
hence $c=\operatorname{Tr}(\rho)/D^k_{\mathcal H}$.
\end{proof}

\subsection{Reduction of the unitary twirling map}
\label{App:UnitTwirl}

The unitary twirling map:
\begin{equation}
\label{unitTwirl}
    \mathcal T(\rho)=\int U^{\otimes t}\rho\, U^{\dagger \otimes t} \operatorname{d}U, \,U \in \operatorname{U}(d),
\end{equation}
where $\operatorname{d}U$ denotes the normalised Haar measure on $\operatorname{U}(d)$, can be presented in a simple closed form using the decomposition into irreducible representations specified in \eqref{Ut} and Schur's Lemmas. Let us first introduce concise notation for this decomposition:
\begin{eqnarray}
    \label{Ut1}
    U^{\otimes t}&=&\sum_{k}\frac{1}{D^k_V}\sum_{m_1,m_2=1}^{D^i_L}\operatorname{Tr}\left(U^{\otimes t}\hat\Pi^{m_1 m_2\dagger}_k\right)\hat\Pi^{m_1 m_2}_k\nonumber\\
    &=&\frac{1}{D^k_V}U^k_{m_1m_2}\hat\Pi^{m_1 m_2}_k,
\end{eqnarray}
in which we have $U^k_{m_1m_2}=\operatorname{Tr}\left(U^{\otimes t}\hat\Pi^{m_1 m_2\dagger}_k\right)$ and we use the Einstein summation convention for the same indices appearing at opposite positions.
In a similar way we can decompose arbitrary state $\rho$ in the Schur operator basis \eqref{FullPiBasis}:
\begin{equation}
    \rho=\rho^{ij}_{m_1\lambda_1m_2\lambda_2}\hat\Pi_{ij}^{m_1\lambda_1m_2\lambda_2},
\end{equation}
in which the summation convention also holds. Let us now rewrite the twirling map \eqref{unitTwirl} in Schur operator basis:

\begin{equation}
\label{unitTwirl1}
    \mathcal T(\rho)=\int \left(\frac{1}{D^k_V}U^k_{m_1m_2}\hat\Pi^{m_1 m_2}_k\right)\left(\rho^{ij}_{n_1\lambda_1n_2\lambda_2}\hat\Pi_{ij}^{n_1\lambda_1n_2\lambda_2}\right)\left(\frac{1}{D^l_V}(U^\dagger)^l_{r_1r_2}\hat\Pi^{r_1 r_2}_l\right) \operatorname{d}U,
\end{equation}
in which $(U^\dagger)^l_{r_1r_2}=\operatorname{Tr}\left((U^\dagger)^{\otimes t}\hat\Pi^{r_1 r_2\dagger}_l\right)$. The expression \eqref{unitTwirl1} can be simplified in three steps:
\begin{itemize}
    \item application of  Schur's Lemma for inequivalent irreducible representations \eqref{Schur1Twirl}, due to which averaging in \eqref{unitTwirl1} is block-diagonal:
    \begin{equation}
    \label{unitTwirl2}
    \mathcal T(\rho)=\int \left(\frac{1}{D^k_V}U^k_{m_1m_2}\hat\Pi^{m_1 m_2}_k\right)\left(\rho^{ij}_{n_1\lambda_1n_2\lambda_2}\hat\Pi_{ij}^{n_1\lambda_1n_2\lambda_2}\right)\left(\frac{1}{D^k_V}(U^\dagger)^k_{r_1r_2}\hat\Pi^{r_1 r_2}_k\right) \operatorname{d}U;
\end{equation}
note that in the above equation the summation convention is applied in a way that the summation indices are \textit{unbounded} and common for the entire expression; namely the transition from \eqref{unitTwirl1} to \eqref{unitTwirl2} relies on changing \textit{double} sum $\sum_{kl}$ over the indices $k,l$ corresponding to in principle different irreducible subspaces into a \textit{single} sum $\sum_k$ over the same irreducible subspaces for left and right action of the unitary group; in the following formulas we treat summation convention in the same way;
\item utilizing block-orthogonality of the corresponding operator bases:
\begin{equation}
\hat\Pi_k^{m_1m_2}\hat\Pi_{ij}^{n_1\lambda_1n_2\lambda_2}\hat\Pi_k^{r_1r_2}=\hat\Pi_k^{m_1m_2}\hat\Pi_k^{n_1n_2}\hat\Pi_k^{\lambda_1\lambda_2}\hat\Pi_k^{r_1r_2},
\end{equation}
which follows from orthogonality relations \eqref{SBOrt} and \eqref{ProdPiBasis}; due to this property we have:
 \begin{eqnarray}
    \label{unitTwirl3}
    \mathcal T(\rho)&=&\int \left(\frac{1}{D^k_V}U^k_{m_1m_2}\hat\Pi^{m_1 m_2}_k\right)\left(\rho^{kk}_{n_1\lambda_1n_2\lambda_2}\hat\Pi_{k}^{n_1n_2}\hat\Pi_{k}^{\lambda_1\lambda_2}\right)\left(\frac{1}{D^k_V}(U^\dagger)^k_{r_1r_2}\hat\Pi^{r_1 r_2}_k\right) \operatorname{d}U\nonumber\\
    &=&\rho^{kk}_{n_1\lambda_1n_2\lambda_2}\hat\Pi_{k}^{\lambda_1\lambda_2}\int \left(\frac{1}{D^k_V}U^k_{m_1m_2}\hat\Pi^{m_1 m_2}_k\right)\hat\Pi_{k}^{n_1n_2}\left(\frac{1}{D^k_V}(U^\dagger)^k_{r_1r_2}\hat\Pi^{r_1 r_2}_k\right) \operatorname{d}U,
\end{eqnarray}
in which the equality follows from \textit{commutativity} of the operators $\hat\Pi_k^{\lambda_1\lambda_2}$ and $\hat\Pi_k^{n_1n_2}$;
\item application of Schur's Lemma for averaging over equivalent irreducible representations \eqref{Schur1Twirl2}, due to which the integral reads:
\begin{equation}
    \int \left(\frac{1}{D^k_V}U^k_{m_1m_2}\hat\Pi^{m_1 m_2}_k\right)\hat\Pi_{k}^{n_1n_2}\left(\frac{1}{D^k_V}(U^\dagger)^k_{r_1r_2}\hat\Pi^{r_1 r_2}_k\right) \operatorname{d}U=\frac{1}{D^k_L}\hat \Pi_k\delta_k^{n_1n_2},
\end{equation}
in which the tensor $\delta_k^{n_1n_2}$ represents a Kronecker delta on the $k$-th irreducible subspace,
and therefore we have:
\begin{eqnarray}
\label{unitTwirl4}
    \mathcal T(\rho)&=&\frac{1}{D^k_L}\delta_k^{n_1n_2}\rho^{kk}_{n_1\lambda_1n_2\lambda_2}\hat\Pi_{k}\hat\Pi_{k}^{\lambda_1\lambda_2}=\frac{1}{D^k_L}\rho^{k}_{\lambda_1\lambda_2}\hat\Pi_{k}^{\lambda_1\lambda_2},
\end{eqnarray}
in which we used simplified notation: $\rho^{k}_{\lambda_1\lambda_2}=\operatorname{Tr}\left(\rho\hat\Pi_{k}^{\lambda_1\lambda_2\dagger}\right)$ and used the fact that $\hat \Pi_k$ plays the role of an identity on the $k$-th subspace, see \eqref{PiIdent}; the last equality can be easilly proved by the following chain of identities:
\begin{equation}
    \delta_k^{n_1n_2}\rho^{kk}_{n_1\lambda_1n_2\lambda_2}=\sum_{n_1,n_2=1}^{D^k_L} \delta_k^{n_1n_2}\operatorname{Tr}\left(\rho\hat\Pi_{kk}^{n_1\lambda_1n_2\lambda_2\dagger}\right)=\sum_{n=1}^{D^k_L}\operatorname{Tr}\left(\rho\hat\Pi_{kk}^{n\lambda_1n\lambda_2\dagger}\right)=\operatorname{Tr}\left(\rho\left(\sum_{n=1}^{D^k_L}\hat\Pi_{kk}^{n\lambda_1n\lambda_2\dagger}\right)\right)=\operatorname{Tr}\left(\rho\hat\Pi_{k}^{\lambda_1\lambda_2\dagger}\right),
\end{equation}
in which in the last step we used definition of the $\hat\Pi_k^{\lambda_1\lambda_2}$ operators \eqref{PiBasis}.
\end{itemize}
To sum up, the final form of the twirling map is given by the following expression:
\begin{equation}
    \label{fullUnitTwirlMap}
    \mathcal T(\rho)=\sum_{k}\frac{1}{D^k_L}\sum_{\lambda_1\lambda_2=1}^{D^k_V}\operatorname{Tr}\left(\rho\hat\Pi_{k}^{\lambda_1\lambda_2\dagger}\right)\hat\Pi_{k}^{\lambda_1\lambda_2}.
\end{equation}
It is equivalent to a permutation operator \eqref{Op}. The map $\mathcal T(\rho)$ is idempotent, $\mathcal T(\mathcal T(\rho))=\mathcal T(\rho)$ (see below for a proof), hence a state of the block-diagonal form:
\begin{equation}
    \label{rhoInvariant}
    \rho_{\textrm{U}}=\sum_{k}\frac{1}{D^k_L}\sum_{\lambda_1\lambda_2=1}^{D^k_V}\rho^{k}_{\lambda_1\lambda_2}\hat\Pi_{k}^{\lambda_1\lambda_2}
\end{equation}
is invariant under unitary twirling operation: $\mathcal T( \rho_{\textrm{U}})= \rho_{\textrm{U}}$. The $(D^k_V\times D^k_V)$-dimensional subsystems spanned by $\{\hat\Pi_k^{\lambda_1\lambda_2}\}$ are called \textit{noiseless subsystems} or \textit{decoherence-free subsystems}. The idempotence of the unitary twirling map follows simply from the invariance of the map with respect to the action of the unitary operations:
\begin{eqnarray}
\tilde U_k\mathcal T(\rho)\tilde U^{\dagger}_k=\int \tilde U_kU_k\rho U^\dagger_k\tilde U^{\dagger}_k\operatorname{d}U=\mathcal T(\rho).\nonumber\\
\end{eqnarray}

\subsection{Reduction of the symmetric twirling map}
\label{App:SymTwirl}
Derivation of a closed form \eqref{fullSymTwirlMap0} of a symmetric twirling map \eqref{SymTwirlDef0} is performed in full analogy with the one for the unitary twirling map. First, note that Schur's Lemmas \ref{lemAv1} and \ref{lemAv2} for unitary averaging  hold in entirely analogous form for symmetric averaging:
\begin{lem}
\label{lemAvSym1}
Let $\{O_k\}$ be a family of non-equivalent irreducible representations of the symmetric group $\operatorname{S}_t$ on some complex vector spaces $\mathcal H_k$ and let $\mathcal H=\bigoplus_k \mathcal H_k$. Let $\rho\in \mathcal H$ be arbitrary linear operator on $\mathcal H$ and let $\mathcal I_k$ be trivial immersions of operators on $\mathcal H_k$ into operators on $\mathcal H$. Then the averaging of $\rho$ over representations $\{O_k\}$ is diagonal in the index $k$, namely:
\begin{eqnarray}
\label{Schur1TwirlSym}
    &&\sum_{k,l}\sum_p\left(\frac{1}{t!}\right) \mathcal I_k\left((O_p)_k\right)\rho\,\mathcal I_l\left((O_p^T)_l\right)\nonumber\\
&&=\sum_{k}\sum_p\left(\frac{1}{t!}\right) \mathcal I_k\left((O_p)_k\right)\rho\,\mathcal I_k\left((O_p^T)_k\right),
\end{eqnarray}
in which by $(O_p)_k$ we mean the value of the $k$-th irreducible representation when applied to a permutation $p$.
\end{lem}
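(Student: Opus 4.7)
The plan is to mirror the proof of Lemma \ref{lemAv1} almost verbatim, with the normalized Haar measure on $\operatorname{U}(d)$ replaced by the normalized counting measure $\tfrac{1}{t!}\sum_{p\in \operatorname{S}_t}$ on the symmetric group. The key observation is that Schur's Lemmas as recalled in Appendix \ref{App:SLemmas} are formulated abstractly for any group acting by irreducible representations on complex vector spaces, so the machinery carries over without modification. The only subtlety is replacing the role of $U^\dagger = U^{-1}$ by $O_p^T$, which is legitimate because the matrices $O_p$ defined in \eqref{ORep} are orthogonal, so $O_p^T = O_p^{-1} = O_{p^{-1}}$.

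The first step is to introduce the off-diagonal averaged operator
$$\mathcal{O}_{kl}(\rho) = \frac{1}{t!}\sum_{p \in \operatorname{S}_t} \mathcal{I}_k\!\left((O_p)_k\right)\rho\,\mathcal{I}_l\!\left((O_p^T)_l\right),$$
so that the left-hand side of \eqref{Schur1TwirlSym} reads $\sum_{k,l}\mathcal{O}_{kl}(\rho)$. The goal is to show $\mathcal{O}_{kl}(\rho)=0$ whenever $k\neq l$.

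The second step is to verify that $\mathcal{O}_{kl}(\rho)$ intertwines the irreducible representations indexed by $l$ and $k$. Concretely, for any fixed $q\in \operatorname{S}_t$ one has
$$\mathcal{I}_k\!\left((O_q)_k\right)\mathcal{O}_{kl}(\rho)\mathcal{I}_l\!\left((O_q^T)_l\right) = \frac{1}{t!}\sum_{p}\mathcal{I}_k\!\left((O_{qp})_k\right)\rho\,\mathcal{I}_l\!\left((O_{qp}^T)_l\right) = \mathcal{O}_{kl}(\rho),$$
where I used $O_qO_p = O_{qp}$ and $O_p^TO_q^T = (O_qO_p)^T = O_{qp}^T$ in the first equality, and the left-invariance of the counting measure (i.e.\ $p\mapsto qp$ is a bijection of $\operatorname{S}_t$) in the second. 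Equivalently, $(O_q)_k\, \mathcal{O}_{kl}(\rho) = \mathcal{O}_{kl}(\rho)\,(O_q)_l$ after restriction to the relevant blocks.

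The third and concluding step is to invoke Schur's Lemma in the form \eqref{Schur1}: since the representations $\{(O_p)_k\}_{p\in \operatorname{S}_t}$ and $\{(O_p)_l\}_{p\in \operatorname{S}_t}$ are, by hypothesis, irreducible and mutually non-equivalent for $k\neq l$, the only intertwiner between them is the zero map. Hence $\mathcal{O}_{kl}(\rho)=0$ for all $k\neq l$, which is exactly the claim \eqref{Schur1TwirlSym}. I do not anticipate a serious obstacle: the argument is a direct finite-group analog of Lemma \ref{lemAv1}, and the only ingredient specific to the symmetric group — that $O_p^T$ implements the inverse representation, so that the averaging operator is genuinely a $\operatorname{S}_t$-equivariant map — is immediate from the orthogonality of the tensor permutation matrices \eqref{ORep}.
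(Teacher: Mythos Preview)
Your proposal is correct and follows essentially the same approach as the paper: define $\mathcal{O}_{kl}(\rho)$, show it is invariant under simultaneous conjugation $\mathcal{I}_k((O_q)_k)\,\cdot\,\mathcal{I}_l((O_q^T)_l)$ using the homomorphism property and the bijection $p\mapsto qp$, and then apply Schur's Lemma \eqref{Schur1} to kill the off-diagonal blocks. Your explicit remark that $O_p^T=O_p^{-1}$ by orthogonality of the tensor permutation matrices is a helpful clarification the paper leaves implicit.
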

\begin{proof}
Let $\mathcal O_{kl}(\rho)=\sum_p\left(\frac{1}{t!}\right) \mathcal I_k\left((O_p)_k\right)\rho\,\mathcal I_l\left((O_p^T)_l\right)$.  $\mathcal O_{kl}(\rho)$ is invariant with respect to left and right action by the permutations corresponding to respective irreducible representations:
\begin{eqnarray}
&&\mathcal I_k((O_q)_k)\mathcal O_{kl}(\rho)\mathcal I_l\left((O_q^T)_l\right)\nonumber\\
&&=\sum_p\left(\frac{1}{t!}\right) \mathcal I_k((O_q)_k(O_p)_k)\rho\,\mathcal I_l\left((O_p^T)_l(O_q^T)_l\right)\nonumber\\
&&=\sum_p\left(\frac{1}{t!}\right) \mathcal I_k((O_{qp})_k)\rho\,\mathcal I_l\left((O_{qp}^T)_l\right)\nonumber\\
&&=\sum_{qp}\left(\frac{1}{t!}\right) \mathcal I_k((O_{qp})_k)\rho\,\mathcal I_l\left((O_{qp}^T)_l\right)=\nonumber\\
&&=\sum_r\left(\frac{1}{t!}\right) \mathcal I_k((O_{r})_k)\rho\,\mathcal I_l\left((O_{r}^T)_l\right)=\mathcal O_{kl}(\rho).
\end{eqnarray}
Since by assumption irreps corresponding to different values of $k$ and $l$ are non-equivalent, from \eqref{Schur1} we obtain that $\mathcal O_{kl}(\rho)=0$ for $k\neq l$, hence the sum in \eqref{Schur1Twirl} contains only terms diagonal in irreps indices.
\end{proof}
\begin{lem}
\label{lemAvSym2}
Let $O_k$ be an irreducible representation of the symmetric group $\operatorname{S}_t$  on some complex vector space $\mathcal H_k$ of dimension $D_{\mathcal H}^k$. Let $\rho\in \mathcal H_k$ be arbitrary linear operator on $\mathcal H_k$. Then the effect of averaging of $\rho$ over representation $O_k$ is proportional to the identity operator on $\mathcal H_k$:
\begin{equation}
\label{Schur1TwirlSym2}
   \sum_p\left(\frac{1}{t!}\right) (O_p)_k\rho\,(O_p^T)_k
=\frac{1}{D_{\mathcal H}^k}\operatorname{Tr}(\rho)\id_{\mathcal H_k}.
\end{equation}
\end{lem}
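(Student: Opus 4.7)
The plan is to run the proof in complete analogy with the unitary version (Lemma \ref{lemAv2}), replacing the Haar integral by normalised counting over $\operatorname{S}_t$ and using that the tensor permutation matrices $O_p$ are orthogonal, so that $(O_p^T)_k = ((O_p)_k)^{-1}$. Accordingly I would set
\begin{equation}
\mathcal O_k(\rho) = \frac{1}{t!}\sum_{p\in \operatorname{S}_t}(O_p)_k\,\rho\,(O_p^T)_k
\end{equation}
and first establish $\operatorname{S}_t$-equivariance: for any fixed $q\in \operatorname{S}_t$, left-multiplication by $(O_q)_k$ and right-multiplication by $(O_q^T)_k$ leaves $\mathcal O_k(\rho)$ invariant, because the representation property $(O_q)_k(O_p)_k = (O_{qp})_k$ together with the fact that $p\mapsto qp$ is a bijection of $\operatorname{S}_t$ allows me to reindex the sum, exactly as in the proof of Lemma \ref{lemAvSym1}.

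Next I would invoke Schur's Lemma in the form \eqref{Schur2}. Because $O_k$ is an irreducible representation on a complex vector space (so the underlying field is algebraically closed), any operator commuting with the whole representation must be a scalar multiple of the identity, giving $\mathcal O_k(\rho) = c\,\id_{\mathcal H_k}$ for some $c\in\mathbb{C}$ that depends on $\rho$.

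To fix $c$, I would take the trace of both sides, using cyclicity and the orthogonality relation $(O_p^T)_k(O_p)_k = \id_{\mathcal H_k}$:
\begin{equation}
\operatorname{Tr}(\mathcal O_k(\rho)) = \frac{1}{t!}\sum_{p}\operatorname{Tr}((O_p^T)_k(O_p)_k\rho) = \operatorname{Tr}(\rho),
\end{equation}
while $\operatorname{Tr}(c\,\id_{\mathcal H_k}) = cD_{\mathcal H}^k$, yielding $c = \operatorname{Tr}(\rho)/D_{\mathcal H}^k$ and hence the claimed identity \eqref{Schur1TwirlSym2}. I do not expect any serious obstacle here: the single point that might look different from the unitary case is the appearance of $O_p^T$ rather than $O_p^{\dagger}$ or $O_p^{-1}$, but orthogonality of the tensor permutation matrices, together with the fact that $p\mapsto O_p$ is a genuine group representation, reduces the argument to precisely the same two ingredients (left-right invariance of the averaging measure and Schur's Lemma over $\mathbb{C}$) that powered Lemma \ref{lemAv2}.
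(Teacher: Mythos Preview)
Your proposal is correct and follows essentially the same route as the paper: define the averaged operator, show its invariance under conjugation by any $(O_q)_k$ via the group-representation property and reindexing, apply Schur's Lemma \eqref{Schur2}, and fix the constant by tracing. The only cosmetic difference is that you make explicit the identification $(O_p^T)_k=((O_p)_k)^{-1}$ via orthogonality, which the paper uses implicitly.
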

\begin{proof}
Let $\mathcal O_{k}(\rho)=\sum_p\left(\frac{1}{t!}\right) (O_p)_k\rho\,(O_p^T)_k$. Analogously to the previous case  the operator $\mathcal O_{k}(\rho)$ is invariant with respect to left and right action by the permutations corresponding to respective irreducible representations:
\begin{eqnarray}
(O_q)_k\mathcal O_{k}(\rho)(O_q^T)_k&=& \sum_p\left(\frac{1}{t!}\right) (O_q)_k(O_p)_k\rho\,(O_p^T)_k(O_q^T)_k\nonumber\\
&=&\sum_p\left(\frac{1}{t!}\right) (O_{qp})_k\rho\,(O_{qp}^T)_k=\mathcal O_{k}(\rho)\nonumber\\
\end{eqnarray}
Therefore according to Schur's Lemma \eqref{Schur2}: $$\mathcal O_{k}(\rho)=c\id_{\mathcal H_k}.$$ In order to find the constant $c$ we calculate the trace of both sides of this property:
\begin{eqnarray}
\operatorname{Tr}\left(O_{k}(\rho)\right)&=&\operatorname{Tr}\left(c\id_{\mathcal H_k}\right)\nonumber\\
\operatorname{Tr}\left(\sum_p\left(\frac{1}{t!}\right) (O_p)_k\rho\,(O_p^T)_k\right)&=&c\operatorname{Tr}\left(\id_{\mathcal H_k}\right)\nonumber\\
\sum_p\left(\frac{1}{t!}\right) \operatorname{Tr}\left((O_p)_k\rho\,(O_p^T)_k\right)&=&cD^k_{\mathcal H}\nonumber\\
\sum_p\left(\frac{1}{t!}\right)\operatorname{Tr}\left(\rho\right)&=&cD^k_{\mathcal H}\nonumber\\
\operatorname{Tr}\left(\rho\right)&=&cD^k_{\mathcal H},
\end{eqnarray}
hence $c=\operatorname{Tr}(\rho)/D^k_{\mathcal H}$.
\end{proof}
All the steps of the derivation of closed form of symmetric twirling map are analogous to the ones for unitary twirling, we however present them below for completeness. Utilising the relation (expressed for convenience in Einstein summation convention):
\begin{eqnarray}
    O_{p}&=&\sum_{k}\frac{1}{D^k_L}\sum_{\lambda_1,\lambda_2=1}^{D^k_V}\operatorname{Tr}\left(O_p\hat\Pi^{\lambda_1 \lambda_2\dagger}_k\right)\hat\Pi^{\lambda_1 \lambda_2}_k\nonumber\\
    &=&\frac{1}{D^k_L}\left(O_p\right)^k_{\lambda_1\lambda_2}\hat\Pi^{\lambda_1 \lambda_2}_k.\nonumber
\end{eqnarray}
we express the symmetric twirling map in full analogy with the expression \eqref{unitTwirl1} for unitary twirling:

\begin{equation}
\label{symTwirl1}
    \mathcal T_{\textrm{sym}}(\rho)=\sum_p\left(\frac{1}{t!}\right) \left(\frac{1}{D^k_L}\left(O_p\right)^k_{\mu_1\mu_2}\hat\Pi^{\mu_1 \mu_2}_k\right)\left(\rho^{ij}_{n_1\lambda_1n_2\lambda_2}\hat\Pi_{ij}^{n_1\lambda_1n_2\lambda_2}\right)\left(\frac{1}{D^l_L}\left(O_p^T\right)^l_{\eta_1\eta_2}\hat\Pi^{\eta_1 \eta_2}_l\right).
\end{equation}
 The expression \eqref{symTwirl1} is simplified analogously:
\begin{itemize}
    \item application of  Schur's Lemma for inequivalent irreducible representations \eqref{Schur1TwirlSym}, due to which averaging in \eqref{symTwirl1} is block-diagonal:
   \begin{equation}
\label{symTwirl2}
    \mathcal T_{\textrm{sym}}(\rho)=\sum_p\left(\frac{1}{t!}\right) \left(\frac{1}{D^k_L}\left(O_p\right)^k_{\mu_1\mu_2}\hat\Pi^{\mu_1 \mu_2}_k\right)\left(\rho^{ij}_{n_1\lambda_1n_2\lambda_2}\hat\Pi_{ij}^{n_1\lambda_1n_2\lambda_2}\right)\left(\frac{1}{D^k_L}\left(O_p^T\right)^k_{\eta_1\eta_2}\hat\Pi^{\eta_1 \eta_2}_k\right);
\end{equation}
as before the entire above expression is a \textit{single} sum $\sum_k$ over the same irreducible subspaces for left and right action of the symmetric group; 
\item utilizing block-orthogonality of the corresponding operator bases:
\begin{equation}
\hat\Pi_k^{\mu_1\mu_2}\hat\Pi_{ij}^{n_1\lambda_1n_2\lambda_2}\hat\Pi_k^{\eta_1\eta_2}=\hat\Pi_k^{\mu_1\mu_2}\hat\Pi_k^{n_1n_2}\hat\Pi_k^{\lambda_1\lambda_2}\hat\Pi_k^{\eta_1\eta_2},
\end{equation}
which follows from orthogonality relations \eqref{SBOrt} and \eqref{ProdPiBasis}; due to this property we have:
 \begin{eqnarray}
    \label{symTwirl3}
   \mathcal T_{\textrm{sym}}(\rho)&=&\sum_p\left(\frac{1}{t!}\right) \left(\frac{1}{D^k_L}\left(O_p\right)^k_{\mu_1\mu_2}\hat\Pi^{\mu_1\mu_2}_k\right)\left(\rho^{kk}_{n_1\lambda_1n_2\lambda_2}\hat\Pi_{k}^{n_1n_2}\hat\Pi_{k}^{\lambda_1\lambda_2}\right)\left(\frac{1}{D^k_L}\left(O_p^T\right)^k_{\eta_1\eta_2}\hat\Pi^{\eta_1 \eta_2}_k\right)\nonumber\\
    &=&\rho^{kk}_{n_1\lambda_1n_2\lambda_2}\hat\Pi_{k}^{n_1n_2}\sum_p\left(\frac{1}{t!}\right) \left(\frac{1}{D^k_L}\left(O_p\right)^k_{\mu_1\mu_2}\hat\Pi^{\mu_1\mu_2}_k\right)\hat\Pi_{k}^{\lambda_1\lambda_2}\left(\frac{1}{D^k_L}\left(O_p^T\right)^k_{\eta_1\eta_2}\hat\Pi^{\eta_1 \eta_2}_k\right),
\end{eqnarray}
in which the equality follows from \textit{commutativity} of the operators $\hat\Pi_k^{\mu_1\mu_2}$ and $\hat\Pi_k^{n_1n_2}$; here is the crucial difference with respect to derivation of the unitary twirling: the averaging has been reduced to averaging of Schur operator basis elements spanning the irreducible representations of the symmetric group;
\item application of Schur's Lemma for averaging over equivalent irreducible representations \eqref{Schur1TwirlSym2}, due to which the sum over all permutations $p\in \textrm S_t$ reads:
\begin{equation}
   \sum_p\left(\frac{1}{t!}\right) \left(\frac{1}{D^k_L}\left(O_p\right)^k_{\mu_1\mu_2}\hat\Pi^{\mu_1\mu_2}_k\right)\hat\Pi_{k}^{\lambda_1\lambda_2}\left(\frac{1}{D^k_L}\left(O_p^T\right)^k_{\eta_1\eta_2}\hat\Pi^{\eta_1 \eta_2}_k\right)=\frac{1}{D^k_V}\hat \Pi_k\delta_k^{\lambda_1\lambda_2};
\end{equation}
therefore we finally have:
\begin{eqnarray}
\label{symTwirl4}
    \mathcal T_{\textrm{sym}}(\rho)&=&\frac{1}{D^k_V}\delta_k^{\lambda_1\lambda_2}\rho^{kk}_{n_1\lambda_1n_2\lambda_2}\hat\Pi_{k}\hat\Pi_{k}^{n_1n_2}=\frac{1}{D^k_V}\rho^{k}_{n_1n_2}\hat\Pi_{k}^{n_1n_2}=\sum_{k}\frac{1}{D^k_V}\sum_{n_1,n_2=1}^{D^k_L}\operatorname{Tr}\left(\rho\hat\Pi^{n_1 n_2\dagger}_k\right)\hat\Pi^{n_1 n_2}_k,
\end{eqnarray}
in which we used simplified notation: $\rho^{k}_{n_1n_2}=\operatorname{Tr}\left(\rho\hat\Pi_{k}^{n_1n_2\dagger}\right)$ and used the fact that $\hat \Pi_k$ plays the role of an identity on the $k$-th subspace, see \eqref{PiIdent}; the last equality can be easilly proved by the following chain of identities:
\begin{equation}
    \delta_k^{\lambda_1\lambda_2}\rho^{kk}_{n_1\lambda_1n_2\lambda_2}=\sum_{\lambda_1\lambda_2=1}^{D^k_V} \delta_k^{\lambda_1\lambda_2}\operatorname{Tr}\left(\rho\hat\Pi_{kk}^{n_1\lambda_1n_2\lambda_2\dagger}\right)=\sum_{\lambda=1}^{D^k_V}\operatorname{Tr}\left(\rho\hat\Pi_{kk}^{n_1\lambda n_2\lambda\dagger}\right)=\operatorname{Tr}\left(\rho\left(\sum_{\lambda=1}^{D^k_V}\hat\Pi_{kk}^{n_1\lambda n_2\lambda\dagger}\right)\right)=\operatorname{Tr}\left(\rho\hat\Pi_{k}^{n_1n_2\dagger}\right),
\end{equation}
in which in the last step we used definition of the $\hat\Pi_k^{n_1n_2}$ operators \eqref{PiBasis}. 
\end{itemize}
The symmetric twirling operation \eqref{symTwirl4} is also idempotent (due to invariance of the map with respect to the action of the permutation operators), and therefore the states of the form:
\begin{equation}
      \rho_\textrm{S}=\sum_{k}\frac{1}{D^k_V}\sum_{m_1,m_2=1}^{D^k_L}\rho_{m_1 m_2}^k\hat\Pi^{m_1 m_2}_k,
\end{equation}
are untouched by this map: $T_{\textrm{sym}}(\rho_\textrm{S})=\rho_\textrm{S}$, and operators $\hat\Pi_k^{m_1m_2}$ span \textit{noiseless subsystems} with respect to symmetric twirling.

\section{Some proofs connected with SLOCC twirling map}
\subsection{Proof of Lemma \ref{lem:trace}}
\label{App:LemTrace}
Here we provide a direct proof that the SLOCC map $\mathcal S (\rho)$ \eqref{SLOCCMap} is trace non-increasing.
\begin{eqnarray}
\mathrm{Tr} (\mathcal S (\rho)) & = &  \mathrm{Tr} \left( \bigotimes_{i=1}^t L_i\rho\bigotimes_{i=1}^t L_i^{\dagger} \right) \nonumber \\
& = &  \mathrm{Tr} \left( \bigotimes_{i=1}^t L_i^{\dagger} \bigotimes_{i=1}^t L_i \rho \right)  \nonumber \\
& = &  \mathrm{Tr} \left( \left( \bigotimes_{i=1}^t  L_i^{\dagger} L_i  \right) \rho \right),
\end{eqnarray}
in which $L_i$ denotes normalised special linear matrix, $L_i=M_i/\|M_i\|$ for $M_i\in \textrm{SL}(d,\mathbb C)$.
Due to the Cartan decomposition (which is in the case of $\textrm{SL}(d, \mathbb C)$ equivalent to the SVD-decomposition) each matrix $M_i$ can be expressed as a product $K_i' A_i K_i$ where $K_i, K_i'$ are unitary and $A_i$ is diagonal with its largest entry denoted $x_i$ \cite{Markiewicz21}. Moreover we have $\| A_i \| = x_i$, which implies, that:
$$L_i^\dagger L_i = \frac{M_i^\dagger M_i}{\|M_i\|^2}  = \frac{1}{x_i^2} K_i^\dagger A_i^2 K_i. $$
Further we have:
\begin{eqnarray}
\mathrm{Tr} (\mathcal S (\rho)) & = &  \mathrm{Tr} \left( \left( \bigotimes_{i=1}^t  \frac{1}{x_i^2} K_i^\dagger A_i^2 K_i \right) \rho \right) \nonumber \\
& = &   \mathrm{Tr} \left( \bigotimes_{i=1}^t A_i^2 \bigotimes_{i=1}^t K_i \rho \bigotimes_{i=1}^t K_i^\dagger   \right) \prod_{i = 1}^t x_i^{-2} \nonumber \\
& = &  \mathrm{Tr} \left( \left(   \bigotimes_{i=1}^t A_i^2 \right) \tilde \rho \right)  \prod_{i = 1}^t x_i^{-2},
\end{eqnarray}
in which a unitarily evolved state $\tilde\rho$ reads:
\begin{equation}
\tilde \rho = \bigotimes_{i=1}^t K_i \rho \bigotimes_{i=1}^t K_i^\dagger 
\end{equation}
The matrix $A = \bigotimes_{i=1}^t A_i^2$ is diagonal with $\| A \| =  \prod_{i = 1}^t x_i^{2}$. Therefore:
$$ \mathrm{Tr} \left( \left(   \bigotimes_{i=1}^t A_i^2 \right) \tilde \rho \right)  \prod_{i = 1}^t x_i^{-2} = \sum_{i = 1}^t \frac{A_{i i} }{\|A\|} \tilde \rho_{i i}.$$
Knowing that $A_{i i} / \|A\| \leq 1$, and from the fact that each positive matrix $\rho$ has non-negative elements on the diagonal we obtain:
$$ \mathrm{Tr} (\mathcal S (\rho)) =   \sum_{i = 1}^t \frac{A_{i i} }{\|A\| } \tilde \rho_{i i} \leq  \sum_{i = 1}^t  \tilde \rho_{i i} = \mathrm{Tr} (\tilde \rho) = \mathrm{Tr} (\rho).$$

\textbf{Remark.} Note that the above proof does not depend on the internal structure of the matrices $A_i$ apart from the assumption that they are diagonal matrices, which holds also in the case of a Cartan decomposition of any reductive Lie group (matrices $A$ represent maximal abelian subgroup, therefore we can always choose a basis in which the representation is diagonal). Therefore the above proof holds also for the case of \textit{general stochastic operation} defined in formula \eqref{gsodef}.

\subsection{Proof of Lemma \ref{lem2}}
\label{proof:lem2}

Inserting the decomposition of $\alpha^{\otimes t}$ \eqref{alphaBlockDiag} into the general formula for unitary twirling \eqref{fullUnitTwirlMap0}  we obtain:
\begin{eqnarray}
&&\mathcal T(\alpha^{\otimes t})\nonumber\\
&&=\sum_k\frac{1}{D^k_L}\sum_{\lambda_1,\lambda_2}\Tr\left(\alpha^{\otimes t}\hat\Pi^{\lambda_1\lambda_2\dagger}_k\right)\hat\Pi^{\lambda_1\lambda_2}_k\nonumber\\
&&=\sum_k\frac{1}{D^k_L}\sum_{\lambda_1,\lambda_2}\Tr\left(\sum_{l,m_1,m_2}\frac{1}{D^l_V}\alpha^l_{m_1m_2}\hat\Pi^{m_1m_2}_l\hat\Pi^{\lambda_1\lambda_2\dagger}_k\right)\hat\Pi^{\lambda_1\lambda_2}_k\nonumber\\
&&=\sum_k\frac{1}{D^k_L}\sum_{\lambda_1,\lambda_2}\sum_{l,m_1,m_2}\frac{1}{D^l_V}\alpha^l_{m_1m_2}\Tr\left(\hat\Pi^{m_1m_2}_l\hat\Pi^{\lambda_1\lambda_2\dagger}_k\right)\hat\Pi^{\lambda_1\lambda_2}_k\nonumber\\
&&=\sum_k\frac{1}{D^k_L}\sum_{\lambda_1,\lambda_2}\sum_{l,m_1,m_2}\frac{1}{D^l_V}\alpha^l_{m_1m_2}\delta_{kl}\delta_{m_1m_2}\delta_{\lambda_1\lambda_2}\hat\Pi^{\lambda_1\lambda_2}_k\nonumber\\
&&=\sum_k\frac{1}{D^k_LD^k_V}\sum_{\lambda,m}\alpha^k_{mm}\hat\Pi^{\lambda\lambda}_k=\sum_k\frac{1}{D^k}\sum_{\lambda,m}\alpha^k_{mm}\hat\Pi^{\lambda\lambda}_k,
\end{eqnarray}
in which we used the orthogonality property for operators \eqref{TracePiBasis}. Finally, utilizing the property \eqref{iProjML} we obtain:
\begin{eqnarray}
\mathcal T(\alpha^{\otimes t})&=&\sum_k\frac{1}{D^k}\sum_{\lambda,m}\alpha^k_{mm}\hat\Pi^{\lambda\lambda}_k\nonumber\\
&=&\sum_k\frac{1}{D^k}\sum_{\lambda,m}\Tr\left(\alpha^{\otimes t}\hat\Pi^{mm\dagger}_k\right)\hat\Pi^{\lambda\lambda}_k\nonumber\\
&=&\sum_k\frac{1}{D^k}\Tr\left(\alpha^{\otimes t}\sum_{m}\hat\Pi^{mm\dagger}_k\right)\sum_{\lambda}\hat\Pi^{\lambda\lambda}_k\nonumber\\
&=&\sum_k\frac{1}{D^k}\Tr\left(\alpha^{\otimes t}\hat\Pi_k^\dagger\right)\hat\Pi_k.
\end{eqnarray}
\end{widetext}

\end{document}